\documentclass[aps,pra,twocolumn,showpacs,superscriptaddress,nofootinbib]{revtex4-2}   
\usepackage{graphicx}  
\usepackage[caption=false,subrefformat=parens,labelformat=parens]{subfig}
\usepackage{float}
\usepackage[toc,page]{appendix}
\usepackage[table]{xcolor}
\usepackage{mathtools}
\usepackage{dcolumn}   
\usepackage{bm}        
\usepackage{color}

\usepackage{hyperref}
\usepackage{pifont}
\usepackage{mathrsfs}
\usepackage{amsmath, amsthm, amssymb}
\usepackage{amsfonts}
\usepackage{relsize}    
\usepackage{bbold}       
\usepackage{accents}    

\hypersetup{
	colorlinks=true, 
	linktoc=all,     
	linkcolor=blue,  
	citecolor=blue,
	filecolor=blue,
	urlcolor=blue
}
\usepackage{soul}

\usepackage{tikz}
\usetikzlibrary{calc,arrows.meta, decorations.markings,decorations.pathmorphing}

\newcommand{\beq}{\begin{equation}}
	\newcommand{\eeq}{\end{equation}}
\newcommand{\bqa}{\begin{eqnarray}}
	\newcommand{\eqa}{\end{eqnarray}}
\newcommand{\nn}{\nonumber}

\newcommand{\erf}[1]{Eq.~(\ref{#1})}

\newcommand{\erfa}[2]{Eqs.~(\ref{#1}) and (\ref{#2})}
\newcommand{\arf}[1]{{App.}~\ref{#1}} 
\newcommand{\srf}[1]{Sec.~\ref{#1}}

\newcommand{\frf}[1]{Fig.~\ref{#1}}

\newcommand{\dg}{^\dagger}

\definecolor{BLACK}{gray}{0}
\definecolor{RED}{rgb}{1,0,0}
\definecolor{GREEN}{rgb}{0.2,.6,0.2}
\definecolor{amber}{rgb}{1.0,0.29,0.1}

\newcommand{\bra}[1]{\langle{#1}|}

\newcommand{\ket}[1]{|{#1}\rangle}

\newtheorem{theorem}{Theorem}

\newtheorem{proposition}{Proposition}

\newtheorem{lemma}{Lemma}

\begin{document}
	
	\widetext

 \title{From normal Lindbladians to non-normal quantum trajectories}

	\author{Shakib Daryanoosh} 
\email{shakib.daryanoosh@curtin.edu.au}
\affiliation{Curtin Centre for Optimisation and Decision Science, Curtin University, Whadjuk Country, Perth 6102, Australia}

%
%


%
\vskip 0.25cm
	\date{\today}

\begin{abstract}
Efficient simulation of Markovian open quantum systems remains a central challenge because the density-matrix description grows exponentially with system size. Quantum trajectory methods provide an alternative by replacing mixed-state evolution with stochastic pure-state realizations. Here we investigate this framework for normal Lindblad generators, whose orthogonal eigenoperator decomposition precludes transient amplification. By decomposing the Lindbladian into deterministic smooth and stochastic jump contributions, we derive an exact steady-state balance relation that identifies the interplay between these processes as the mechanism underlying Liouvillian normality. We further show that normal Lindbladians exclude exceptional points and that, although individual quantum trajectories generally exhibit stochastic coupling between Liouvillian eigenmodes, these couplings cancel upon ensemble averaging, recovering independent orthogonal relaxation modes. These results provide a trajectory-level interpretation of Liouvillian normality and clarify how a global property of the Lindblad generator is realized through stochastic quantum dynamics.
\end{abstract}

	\maketitle

\section{Introduction} \label{sec:intro}

The dynamics of open quantum systems underpins a broad range of modern quantum
technologies, including quantum sensing, communication, and computation~\cite{BrePet07,BriZol98,LidBru13,DegCap17,VerCir09,Darslu24,CamHap24}. In these settings, irreversible interactions with the
environment are unavoidable and are most commonly described within the Markovian approximation by the Gorini--Kossakowski--Sudarshan--Lindblad (GKSL)
master equation~\cite{GKS76,Lindblad76}. While the master equation provides a
complete statistical description of the reduced system dynamics, many physical
processes are more naturally understood in terms of continuously monitored
quantum trajectories~\cite{DalMol92,DumRit92,GisPer92,Car08,WisMil10,MurSid13}. Beyond their conceptual role, quantum trajectories have become indispensable computational tools, allowing
large open quantum systems to be simulated through stochastic wave-function
methods rather than direct propagation of the full density
matrix~\cite{PenZho25,SanMen25,BorMar26,LiuChe26,SanWil26}.

Recent years have witnessed growing interest in the dynamical properties of
non-Hermitian and dissipative phenomena~\cite{AshUed20, OchFul24, BelDon25,LiMa26}, particularly the role of non-normal operators whose eigenvectors are non-orthogonal despite possessing stable spectra. Non-normality is now recognized as the mechanism underlying transient amplification, pseudospectral sensitivity, and phenomena such as quantum chaos and quantum Mpemba effect~\cite{TreEmb05,Dar26,NavLar26, Longhi25}. These ideas have recently been extended to Markovian quantum dynamics, where the Lindblad generator (or Liouvillian) itself can exhibit non-normal behavior, giving rise to transient amplification and geometric effects that are invisible from the Liouvillian spectrum alone~\cite{Dar26}. This viewpoint introduces a complementary classification of dissipative quantum dynamics based not only on spectral properties but also on the geometry of Liouville space.

Despite these developments, existing analyses have remained almost entirely at
the level of the unconditional master equation. Since experimentally observed open-system dynamics and numerical trajectory simulations are naturally described through stochastic master equations, an important open question is whether and how Lindbladian non-normality manifests itself at the level of individual trajectory realizations. In particular, the deterministic non-Hermitian evolution and stochastic quantum jumps represent distinct
physical processes whose separate contributions to non-normal behavior have not
been systematically investigated. Understanding this decomposition is
important both for interpreting continuously monitored dynamics and for
assessing the complexity of trajectory-based simulation methods.

In this work we develop a trajectory-level theory of Liouvillian normality. Starting from the decomposition of the Lindblad generator into deterministic smooth evolution and stochastic quantum jumps, we show that Liouvillian normality is an emergent property of their algebraic interplay rather than of either contribution individually. Our analysis shows that normality generally does not arise because the smooth dynamics or the jump process is individually normal. Instead, it emerges from their combined action. We further demonstrate that this mechanism acquires a particularly transparent interpretation in the steady-state subspace, where the smooth and jump contributions obey an exact balance relation.

We subsequently investigate the consequences of this structure for quantum
trajectory dynamics by expanding both conditioned and unconditional evolution
in the Liouvillian eigenoperator basis. It is first shown that normal
Lindbladians are necessarily free of Exceptional Points (EPs)~\cite{BenBoe98, Heiss12}. Moreover, while the
unconditional evolution generated by a normal Lindbladian decomposes into
independent Liouvillian modes, stochastic quantum trajectories generally
continue to couple orthogonal modes within individual realizations. This
establishes a clear distinction between ensemble and trajectory dynamics,
showing that modal independence is recovered only after ensemble averaging.

Building on this framework, we further analyze second-moment dynamics in a doubled
Liouville-space representation and show that global normality excludes
transient amplification arising from the uncoupled doubled evolution. Within the doubled-space description, any remaining growth mechanism is associated with the stochastic correlation contribution induced by the unraveling. Finally, we examine two important subclasses---Hermitian Lindbladians and structured dissipative processes---for which the general theory simplifies considerably.

The remainder of the paper is organized as follows. In
Sec.~\ref{sec:framework} we briefly review Liouvillian normality and the
necessary background. Section~\ref{sec:normal_liouvillians} develops the
trajectory-level theory of normality. The geometric consequences for wave-function
evolution and state overlap are investigated in
Sec.~\ref{subsec:wavefunction_geometry}, while
Sec.~\ref{sec:normal:L:features} discusses several general structural features
of normal Lindbladians. In
Sec.~\ref{subsec:single_trajectory_dynamics} we analyze conditional trajectory
dynamics in the Liouvillian eigenoperator basis, and
Sec.~\ref{subsec:variance_bounds} develops the doubled Liouville-space
formalism for second-moment dynamics and trajectory fluctuations. Finally,
Sec.~\ref{sec:specific:scenarios} examines two analytically tractable
subclasses of normal Lindbladians before we conclude in
Sec.~\ref{sec:conclusion}.

\section{Markov open quantum systems and non-normality} \label{sec:framework}
For Markovian open quantum system the master equation is described by the GKSL master equation~\cite{Lindblad76,GKS76}
\begin{equation} \label{eqn:GKSL}
	\dot{\rho} = -i[\hat H,\rho] + \sum_{k=1}^K \left(\hat L_k \rho \hat L_k^\dagger - \frac{1}{2} \{\hat L_k^\dagger \hat L_k, \rho\} \right) \equiv \mathcal L(\rho),
\end{equation}
where $\mathcal L$ is the Lindbladian generator acting on operator space, and $\{\hat L_k\}_{k=1}^K$ are Lindblad operators (we assume $\hbar=1$ throughout). We can assume that any generator $\mathcal L$ admits the decomposition
\begin{equation} \label{L:d:nd}
	\mathcal L = \mathcal L_{\rm d} + \mathcal L_{\rm nd},
\end{equation}
where
\begin{equation} \label{Ld:Lnd}
	\mathcal L_{\rm d} := \frac{1}{2}(\mathcal L + \mathcal L^\dagger), \qquad 
	\mathcal L_{\rm nd} := \frac{1}{2}(\mathcal L - \mathcal L^\dagger).
\end{equation}
The Hermitian $\mathcal L_{\rm d}$ and anti-Hermitian $\mathcal L_{\rm nd}$ components are defined with respect to the Hilbert-Schmidt inner product (which for two operators $\hat A$ and $\hat B$ in finite-dimensional Hilbert space, it is defined as $\big\langle{\hat A,\hat B}\big\rangle \coloneq {\rm Tr}\big[\hat A^\dagger \hat B\big]$), and are denoted as the dissipative and nondissipative parts, respectively..

It was shown in Ref.~\cite{Dar26} that Markovian dynamics generated by a
Lindbladian $\mathcal L$ can be classified according to its degree of
non-normality quantified through 
\begin{equation} \label{eqn:nonnorm:def}
	\eta(\mathcal{L}) \coloneq \left\|\left[\mathcal{L},\mathcal{L}\dg\right]\right\|,
\end{equation}
the dissipation strength
\begin{equation}
	\delta(\mathcal{L}) \coloneq \left\|\mathcal{L}_{\rm d}\right\|
\end{equation}
and the ratio
\begin{equation}
	\kappa(\mathcal{L}) = \frac{\eta(\mathcal{L})}{[\delta(\mathcal{L})]^2}.
\end{equation}
Here and throughout $\|\cdot\|$ denotes the operator norm induced by the Hilbert-Schmidt inner product: 
\begin{equation} \label{eqn:OptNorm:HS}
	\|\mathcal L\| := \sup_{X \neq 0} \frac{\|\mathcal L(X)\|_{\rm HS}}{\|X\|_{\rm HS}}.
\end{equation}

This classification was formulated entirely at the level of the full Lindblad generator. The objective of the present work is to investigate whether the same notions leave observable signatures at the level of quantum trajectories and, if so, whether they influence the complexity of trajectory-based simulations.

\section{Quantum trajectories and non-normality} \label{sec:normal_liouvillians}
While the quantity $\eta(\mathcal L)$ characterizes the full generator and is
independent of the choice of unraveling, a quantum trajectory resolves the
same dynamics into distinct deterministic and stochastic processes. This
raises a natural question: how is global Liouvillian normality reflected in
the individual components of a trajectory? In particular, the smooth
non-Hermitian evolution and stochastic jump events are each capable of
generating non-orthogonal mode dynamics, even when their combined contribution
to the unconditional evolution is normal. To address this question, we
decompose the Lindbladian into its no-jump and jump components and analyze the
corresponding non-normality structure.

The master equation, \erf{eqn:GKSL} can be rearranged such that the Lindbladian is partitioned into a deterministic, non-Hermitian smooth superoperator $\mathcal{S}$ and a stochastic jump superoperator $\mathcal{J}$:
\begin{equation} \label{eqn:so:jump:no-jump}
	\mathcal L \equiv \mathcal{S} + \mathcal{J},
\end{equation}
where the no-jump evolution generator is 
\begin{equation}
	\mathcal{S}(\rho) \equiv -i(\hat H_{\rm eff} \rho - \rho \hat H_{\rm eff}\dg )
\end{equation}
with the non-Hermitian Hamiltonian defined as
\begin{equation} \label{eqn:nonH:Ham}
	\hat H_{\rm eff} = \hat H - \frac{i}{2} \sum_k \hat L_k\dg \hat L_k,
\end{equation}
and the quantum jumps are governed through
\begin{equation}
	\mathcal{J}(\rho) \equiv \sum_k \hat L_k \rho \hat L_k\dg = \sum_k \mathcal{J}_k(\rho).
\end{equation}
An important observation is that the decomposition in \erf{eqn:so:jump:no-jump} depends on the chosen unraveling, that is to say, it is invariant under transformation~\cite{DarBra16}
\begin{subequations} \label{eqn:invar:ME:trans}
	\begin{align}
		\hat{L}_k &\longrightarrow \hat{L}_{m} = \sum_{k=1}^K \widetilde{U}_{mk} \hat{L}_k + \alpha_{m} \hat I, \\
		\hat{H} &\longrightarrow \hat{H} - \frac{i}{2} \sum_{m=1}^{M} \left(\alpha_m^\ast \hat{L}_{m} - \alpha_{m} \hat{L}_{m}\dg\right),
	\end{align}
\end{subequations}
where $\alpha \in \mathbb{C}$, the identity operator is denoted by $\hat I$, and $\widetilde{U}_{mk} \in \mathbb{C}^{M\times K}$ is an arbitrary semi-unitary matrix. In contrast, the full Lindbladian $\mathcal L$ is unraveling independent. 

The corresponding adjoint superoperators with respect to the Hilbert-Schmidt inner product are uniquely determined by the relation $\langle \hat A,{\cal L}(\hat B)\rangle = \langle{\cal L}\dg(\hat A),\hat B\rangle$, yielding:
\begin{align}
	\mathcal{S}^\dagger(\rho) &= i[\hat H, \rho] - \frac{1}{2}\sum_k \{\hat L_k^\dagger \hat L_k, \rho\}, \\
	\mathcal{J}^\dagger(\rho) &= \sum_k \hat L_k^\dagger \rho \hat L_k,
\end{align}
where $\{\hat A, \hat B\}$ denotes anti-commutator for operators $\hat A$ and $\hat B$. The decomposition into $\mathcal S$ and $\mathcal J$ is particularly useful
because the two components have fundamentally different dynamical roles. The
smooth component governs the deterministic evolution between quantum jumps,
while the jump component introduces stochastic state transformations.
Although these contributions combine to produce on average the same unconditional
Lindblad evolution, their potential individual non-normal properties need not coincide
with those of the full generator.

A normal Lindbladian superoperator $\eta(\mathcal{L})=0$ satisfies $\left[\mathcal{L}, \mathcal{L}^\dagger\right] = 0$ which can be expanded out to obtain
\begin{equation} \label{eq:operator_identity}
	\mathcal{D}_{\mathcal{S}} + \mathcal{D}_{\mathcal{J}} + \mathcal{D}_{\mathcal{SJ}} = 0,
\end{equation}
where\footnote{The quantities $\mathcal D_{\mathcal S},\mathcal D_{\mathcal J}$ and
	$\mathcal D_{\mathcal SJ}$ are operator-valued contributions to the Liouvillian commutator. Although $\mathcal D_{\mathcal S}\neq0$ (respectively $\mathcal D_{\mathcal J}\neq0$) implies that $\mathcal S$ (respectively $\mathcal J$) is non-normal, their operator norms do not combine to give the Liouvillian normality measure $\eta(\mathcal L)=\|[\mathcal L,\mathcal L^\dagger]\|$. Only the total commutator determines the global normality of the Lindbladian which satisfies $\|\mathcal{D}_{\mathcal{S}} + \mathcal{D}_{\mathcal{J}} + \mathcal{D}_{\mathcal{SJ}}\| \le \|\mathcal{D}_{\mathcal{S}}\|+\|\mathcal{D}_{\mathcal{J}}\|+\|\mathcal{D}_{\mathcal{SJ}}\|$.}
\begin{subequations} \label{eqn:op:id:separate}
	\begin{align} 
		\mathcal{D}_{\mathcal{S}} &\equiv [\mathcal{S}, \mathcal{S}^\dagger], \\
		\mathcal{D}_{\mathcal{J}} &\equiv [\mathcal{J}, \mathcal{J}^\dagger], \\
		\mathcal{D}_{\mathcal{SJ}} &\equiv [\mathcal{S}, \mathcal{J}^\dagger] + [\mathcal{J}, \mathcal{S}^\dagger].
	\end{align}
\end{subequations}
The expression in \erf{eq:operator_identity} reveals that Lindbladian normality is not generally achieved through simple component-wise cancellation. Instead, the global constraint forces a strict algebraic relation between the individual metrics and a coherent coupling superoperator $\mathcal{D}_{\mathcal{SJ}}$.

Note that due to the transformation presented in \erf{eqn:invar:ME:trans}, the operators $\mathcal{D}_{\mathcal{S}}$, $\mathcal{D}_{\mathcal{J}}$, and $\mathcal{D}_{\mathcal{SJ}}$ are generally unraveling dependent, while the total commutator $\left[\mathcal{L},\mathcal{L}\dg\right]$ and therefore $\eta(\mathcal L)$ remains invariant.

In this framework, $\mathcal{D}_{\mathcal{S}}$ and $\mathcal{D}_{\mathcal{J}}$ represent the independent contributions of the smooth and stochastic components to non-orthogonal mode mixing induced by stochastic map. Crucially, as shown below, these contributions are not canceled directly by one another. Instead, the coupling term $\mathcal{D}_{\mathcal{SJ}}$ provides the compensating contribution required to satisfy the Lindbladian normality condition.

\subsection{Steady-state balance of smooth and jump non-normality} \label{subsec:ss:balance}

The decomposition of the Liouvillian into smooth and stochastic contributions,
raises the question of how the non-normality of the individual components is
related under Liouvillian normality. In particular, while the total
commutator vanishes, $[\mathcal{L},\mathcal{L}^{\dagger}]=0$,
the individual contributions $\mathcal{D}_{\mathcal{S}}$ and
$\mathcal{D}_{\mathcal{J}}$ need not vanish separately. In other words, the operator identity of Eq.~\eqref{eq:operator_identity} constrains the decomposition of Liouvillian normality into its smooth, jump, and mixed contributions. To expose the physical content of this condition, we evaluate it on the asymptotic steady state $\rho_{\rm ss}$ satisfying $\mathcal{L}(\rho_{\rm ss})=0$. The following proposition establishes the steady-state balance relation between these two contributions.

\begin{proposition}[Steady-state balance of smooth and jump non-normality]
	\label{prop:steady_state_balance}
	Let $\mathcal{L}=\mathcal{S}+\mathcal{J}$ be a normal Lindblad generator,
	$[\mathcal{L},\mathcal{L}^{\dagger}]=0$, possessing a steady state
	$\rho_{\rm ss}$ satisfying $\mathcal{L}(\rho_{\rm ss})=0$. Then the smooth and jump commutator contributions satisfy
	\begin{equation} \label{eq:self-commutator:proof}
		\left\langle \rho_{\rm ss},[\mathcal{D}_\mathcal{S}-\mathcal{D}_\mathcal{J}](\rho_{\rm ss})\right\rangle
		= 0.
	\end{equation}
	Equivalently, the cross contribution obeys
	\begin{equation} \label{eq:cross-commutator:proof}
		\left\langle \rho_{\rm ss},[\mathcal{D}_\mathcal{SJ}+2\mathcal{D}_\mathcal{O}](\rho_{\rm ss})\right\rangle = 0, \quad for\;\; \mathcal{O} \in \{\mathcal{S}, \mathcal{J}\}.
	\end{equation}

\end{proposition}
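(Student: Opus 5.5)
The plan is to reduce everything to norms of single vectors in Liouville space, using the one consequence of normality that matters here: a normal operator and its adjoint have the same kernel. First I will show that $\mathcal{L}(\rho_{\rm ss})=0$ implies $\mathcal{L}^\dagger(\rho_{\rm ss})=0$. Using $[\mathcal{L},\mathcal{L}^\dagger]=0$ and the Hilbert--Schmidt adjoint relation,
\begin{equation*}
\|\mathcal{L}^\dagger(\rho_{\rm ss})\|_{\rm HS}^2=\langle\rho_{\rm ss},\mathcal{L}\mathcal{L}^\dagger(\rho_{\rm ss})\rangle=\langle\rho_{\rm ss},\mathcal{L}^\dagger\mathcal{L}(\rho_{\rm ss})\rangle=\|\mathcal{L}(\rho_{\rm ss})\|_{\rm HS}^2=0 .
\end{equation*}
Writing $\mathcal{L}=\mathcal{S}+\mathcal{J}$ and $\mathcal{L}^\dagger=\mathcal{S}^\dagger+\mathcal{J}^\dagger$, this gives the two vector identities
\begin{equation*}
\mathcal{S}(\rho_{\rm ss})=-\mathcal{J}(\rho_{\rm ss}),\qquad \mathcal{S}^\dagger(\rho_{\rm ss})=-\mathcal{J}^\dagger(\rho_{\rm ss}).
\end{equation*}

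Next I will expand the diagonal matrix elements of the self-commutators in the same way. For any superoperator $\mathcal{O}$,
\begin{equation*}
\langle\rho_{\rm ss},[\mathcal{O},\mathcal{O}^\dagger](\rho_{\rm ss})\rangle=\|\mathcal{O}^\dagger(\rho_{\rm ss})\|_{\rm HS}^2-\|\mathcal{O}(\rho_{\rm ss})\|_{\rm HS}^2 .
\end{equation*}
Applying this with $\mathcal{O}=\mathcal{S}$ and substituting the two identities above gives $\|\mathcal{J}^\dagger(\rho_{\rm ss})\|_{\rm HS}^2-\|\mathcal{J}(\rho_{\rm ss})\|_{\rm HS}^2$. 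This is exactly the same expansion with $\mathcal{O}=\mathcal{J}$. Hence $\langle\rho_{\rm ss},\mathcal{D}_\mathcal{S}(\rho_{\rm ss})\rangle=\langle\rho_{\rm ss},\mathcal{D}_\mathcal{J}(\rho_{\rm ss})\rangle$, which is Eq.~\eqref{eq:self-commutator:proof}.

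For the equivalent form, I will take the matrix element of the operator identity~\eqref{eq:operator_identity} in $\rho_{\rm ss}$:
\begin{equation*}
\langle\rho_{\rm ss},\mathcal{D}_\mathcal{SJ}(\rho_{\rm ss})\rangle=-\langle\rho_{\rm ss},(\mathcal{D}_\mathcal{S}+\mathcal{D}_\mathcal{J})(\rho_{\rm ss})\rangle .
\end{equation*}
Combining this with the equality of the $\mathcal{S}$ and $\mathcal{J}$ contributions already shown, the right-hand side equals $-2\langle\rho_{\rm ss},\mathcal{D}_\mathcal{O}(\rho_{\rm ss})\rangle$ for either $\mathcal{O}\in\{\mathcal{S},\mathcal{J}\}$. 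This is Eq.~\eqref{eq:cross-commutator:proof}.

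Conversely, if Eq.~\eqref{eq:cross-commutator:proof} holds for one choice of $\mathcal{O}$, the same relation returns Eq.~\eqref{eq:self-commutator:proof}, so the two statements are equivalent given Eq.~\eqref{eq:operator_identity}.

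The only nontrivial step is the kernel argument $\mathcal{L}(\rho_{\rm ss})=0\Rightarrow\mathcal{L}^\dagger(\rho_{\rm ss})=0$, which is where normality is really used. Without it, $\mathcal{S}^\dagger(\rho_{\rm ss})$ and $-\mathcal{J}^\dagger(\rho_{\rm ss})$ would not be related. Everything after that is bookkeeping with the Hilbert--Schmidt inner product.
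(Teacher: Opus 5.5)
Your proof is correct and follows the same route as the paper. Normality gives $\ker\mathcal{L}=\ker\mathcal{L}^\dagger$, hence $\mathcal{S}(\rho_{\rm ss})=-\mathcal{J}(\rho_{\rm ss})$ and $\mathcal{S}^\dagger(\rho_{\rm ss})=-\mathcal{J}^\dagger(\rho_{\rm ss})$, and these relations are then substituted into the steady-state expectation of the expanded commutator identity \eqref{eq:operator_identity}. What you add is detail the paper leaves implicit: the one-line proof of kernel equivalence, and the norm-difference form $\langle X,[\mathcal{O},\mathcal{O}^\dagger](X)\rangle=\|\mathcal{O}^\dagger(X)\|_{\rm HS}^2-\|\mathcal{O}(X)\|_{\rm HS}^2$. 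You also derive the balance relation first and the cross relation second, the reverse of the paper's order.
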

\begin{proof}
The proof follows from evaluating the normality condition in the steady-state
subspace using the Hilbert-Schmidt inner product. The key step is the kernel equivalence of a normal superoperator
\begin{equation}
	\label{eq:kernel_equivalence}
	\mathrm{ker}(\mathcal{L})
	=
	\mathrm{ker}(\mathcal{L}^{\dagger}),
\end{equation}
which implies if $\mathcal L(\rho_{\rm ss})=0$, then $\mathcal L^\dagger(\rho_{\rm ss})=0$, and hence
\begin{subequations} \label{eqn:ss:smooth:jump}
	\begin{align}
	\mathcal{S}(\rho_{\rm ss})
	&=
	-\mathcal{J}(\rho_{\rm ss}),\\
	\mathcal{S}^{\dagger}(\rho_{\rm ss})
	&=
	-\mathcal{J}^{\dagger}(\rho_{\rm ss}).
	\end{align}
\end{subequations}
To convert the operator identity into a scalar constraint on the steady state, we evaluate its Hilbert-Schmidt expectation value
\begin{equation}
	\left\langle \rho_{\rm ss},[\mathcal{L},\mathcal{L}^\dagger](\rho_{\rm ss})\right\rangle=0.
\end{equation}
Substituting the expansion in \erf{eq:operator_identity} and the steady-state kernel relations \erf{eqn:ss:smooth:jump} yields~\erf{eq:cross-commutator:proof} from which the balance expression~\erf{eq:self-commutator:proof} follows immediately. 
\end{proof}

\paragraph*{Physical interpretation.}

Proposition~\ref{prop:steady_state_balance} clarifies how 
Liouvillian normality is realized at the level of individual unraveling
components. The smooth $\mathcal{S}$ and jump terms
$\mathcal{J}$ are not required to be normal independently. Each may possess
a non-vanishing self-commutator and therefore retain the capacity to distort
the dynamical trajectory. The steady-state balance relation,
however, shows that these contributions are not independent. Their
expectation values are locked together by the global normality constraint,
while the mixed smooth--jump contribution supplies the compensating term
required by normality condition.

Consequently, Lindbladian normality does not arise because the no-jump and jump
components are individually free of non-normal effects. Rather, it emerges from a
precise balance between the self-commutator contributions and the
smooth--jump interference term. This cancellation mechanism provides the
structural foundation for the trajectory-stability analysis developed in the
following sections. 

\section{Normal Liouvillians: spectral rigidity and trajectory stability} \label{sec:normal:L:features}
\subsection{Absence of exceptional points} \label{subsec:exceptional_points}

A defining feature of non-normal open quantum systems is their vulnerability to exceptional points. An exceptional point occurs when a parameter-dependent superoperator undergoes a spectral degeneracy where two or more eigenvalues coalesce, and their corresponding eigenoperators simultaneously transform into a single, shared state vector~\cite{MulRot08, MinNor19, AshUed20}. This renders the Liovillian matrix defective (not diagonalizable) and introduces nontrivial Jordan blocks into the dynamics. In quantum trajectory simulations, approaching an EP can lead to critical
algorithmic slowing down, numerical ill-conditioning, and noticeable sampling variance spikes. 

We now prove algebraically that a normal Lindblad superoperator is strictly forbidden from possessing EPs, guaranteeing spectral stability across all parameter spaces.

\begin{theorem} \label{theorem:noEP}
	Let $\mathcal{L}$ be a normal superoperator acting on the Hilbert-Schmidt space, satisfying $\|[\mathcal{L}, \mathcal{L}^\dagger]\| = 0$. The superoperator $\mathcal{L}$ cannot possess an exceptional point and is unitarily diagonalizable.
\end{theorem}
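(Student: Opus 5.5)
The plan is to run the finite-dimensional spectral theorem for normal operators directly on Liouville space, with $\mathcal L$ viewed as a linear map on the $d^2$-dimensional complex Hilbert space of operators equipped with $\langle \hat A,\hat B\rangle = {\rm Tr}[\hat A^\dagger \hat B]$. First, note that $\|[\mathcal L,\mathcal L^\dagger]\|=0$ means, since $\|\cdot\|$ is a norm, that $[\mathcal L,\mathcal L^\dagger]=0$ exactly. The key identity is then
\begin{equation}
\|\mathcal L(X)\|_{\rm HS}^2=\langle X,\mathcal L^\dagger\mathcal L(X)\rangle=\langle X,\mathcal L\mathcal L^\dagger(X)\rangle=\|\mathcal L^\dagger(X)\|_{\rm HS}^2 .
\end{equation}
Applying this to the shifted map $\mathcal L-\lambda\,\mathrm{id}$, which is also normal for any $\lambda\in\mathbb C$, gives $\ker(\mathcal L-\lambda)=\ker(\mathcal L^\dagger-\bar\lambda)$. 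This generalizes the kernel equivalence already used in the proof of Proposition~\ref{prop:steady_state_balance}.

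Second, I will show that there are no nontrivial Jordan chains, which is the algebraic content of ``no EP''. Suppose $(\mathcal L-\lambda)^2 X=0$ and set $Y=(\mathcal L-\lambda)X$. Then $Y\in\ker(\mathcal L-\lambda)=\ker(\mathcal L^\dagger-\bar\lambda)$, so
\begin{equation}
\|Y\|_{\rm HS}^2=\langle (\mathcal L-\lambda)X,Y\rangle=\langle X,(\mathcal L^\dagger-\bar\lambda)Y\rangle=0 .
\end{equation}
Hence $\ker(\mathcal L-\lambda)^2=\ker(\mathcal L-\lambda)$. Every generalized eigenspace therefore coincides with the eigenspace, and all Jordan blocks are $1\times1$. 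Eigenoperators belonging to distinct eigenvalues are orthogonal: if $\mathcal L X=\lambda X$ and $\mathcal L Z=\mu Z$, then $\mathcal L^\dagger Z=\bar\mu Z$, so $\lambda\langle Z,X\rangle=\langle Z,\mathcal L X\rangle=\langle\mathcal L^\dagger Z,X\rangle=\mu\langle Z,X\rangle$. Choosing an orthonormal basis within each eigenspace then yields a Hilbert--Schmidt orthonormal eigenbasis, so $\mathcal L=\mathcal U\Lambda\mathcal U^\dagger$ with $\mathcal U$ unitary on Liouville space.

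The step I expect to need the most care is not the algebra but the translation of ``exceptional point'' into a precise statement. An EP is a parameter value at which eigenvalues and eigenvectors coalesce, making $\mathcal L$ defective. Two readings are possible. If normality is assumed at every parameter value, the argument above applies pointwise, so no parameter value gives a defective generator and no EP can exist. If normality is assumed only at one parameter value, the honest conclusion is weaker: that single point cannot be an EP, because coalescing eigenvalues still carry a complete orthogonal set of eigenoperators (a diabolic rather than exceptional degeneracy). I will state the theorem in the pointwise sense and remark on this distinction, so that ``across all parameter spaces'' is understood to mean along families of normal Lindbladians.
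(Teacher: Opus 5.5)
Your proof is correct and is essentially the paper's alternative argument in Appendix~\ref{appn:proof:noEP}. That argument rules out a length-two Jordan chain $Y=(\mathcal L-\lambda)X\neq 0$, $(\mathcal L-\lambda)Y=0$, via $\|Y\|_{\rm HS}^2=\langle X,(\mathcal L^\dagger-\bar\lambda)Y\rangle=0$, whereas the main-text proof simply cites the spectral theorem. Your version is tighter in two respects: you derive $\ker(\mathcal L-\lambda)=\ker(\mathcal L^\dagger-\bar\lambda)$ from $\|\mathcal L X\|_{\rm HS}=\|\mathcal L^\dagger X\|_{\rm HS}$ instead of attributing it to the spectral theorem, which would already contain the conclusion; and you supply the orthogonality of distinct eigenspaces needed for the ``unitarily diagonalizable'' half of the claim.
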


\begin{proof}
	A normal operator on a finite-dimensional Hilbert space satisfies the spectral
	theorem and is therefore unitarily diagonalizable. Consequently, there exists
	an orthonormal basis of eigenoperators
	$\{\rho_\alpha\}$ such that
	\begin{subequations} \label{eqn:normal:L:basis}
		\begin{align}
			\mathcal L(\rho_\alpha) &= \Lambda_\alpha \rho_\alpha, \\
			\mathrm{Tr}\left(\rho_\alpha^\dagger \rho_\beta\right)&=\delta_{\alpha\beta},
		\end{align}
	\end{subequations}
	where $\Lambda_\alpha \in \mathbb{C}$. Since $\mathcal L$ is diagonalizable, it contains no nontrivial Jordan blocks. Exceptional points require precisely such a Jordan structure, in which the algebraic and geometric multiplicities of an eigenvalue differ. Therefore, a normal Lindbladian cannot possess an EP.
\end{proof}
For an alternative proof we refer the interested reader to consult \arf{appn:proof:noEP}.

The absence of exceptional points constitutes a form of spectral rigidity.
Although the smooth and jump components may each be individually non-normal,
global normality forces the full Liouvillian to remain diagonalizable
throughout parameter space. Consequently, trajectory fluctuations cannot be
amplified through Jordan-block dynamics, eliminating one of the principal
mechanisms responsible for critical slowing down and extreme sensitivity in
non-normal open quantum systems.

\subsection{Orthogonal Liouville-space dynamics} \label{sec:orthogonal:dyn}

The absence of exceptional points has an immediate dynamical consequence.
Because a normal Liouvillian satisfies$[\mathcal L,\mathcal L^\dagger]=0$, 
the spectral theorem guarantees the existence of a complete orthonormal eigenoperator basis $\{\rho_\alpha\}$ in Hilbert-Schmidt space satisfying \erf{eqn:normal:L:basis}.
Any density operator may therefore be expanded as
\begin{equation} \label{eqn:rho:expand:rho_a}
	\rho(t) = \sum_\alpha c_\alpha(t)\rho_\alpha .
\end{equation}
The master equation $\dot{\rho}=\mathcal L(\rho)$ then reduces to
\begin{equation} \label{eqn:ME:coef}
	\dot c_\alpha(t)=\Lambda_\alpha c_\alpha(t),
\end{equation}
with solution
\begin{equation} \label{eqn:ME:coef:sol}
	c_\alpha(t) = c_\alpha(0)e^{\Lambda_\alpha t}.
\end{equation}
Thus each Liouvillian mode evolves independently. Unlike non-normal
systems, where non-orthogonal eigenoperators can generate substantial
transient amplification through modal interference, a normal
Liouvillian admits no such mechanism. The dynamics decomposes into a
collection of independent exponential modes whose amplitudes evolve
without cross-coupling.

This orthogonal modal structure provides a spectral interpretation of
the trajectory stability discussed in the following sections. Although
the individual smooth and jump contributions may each be non-normal,
their combined action produces a Liouvillian whose global dynamics is
governed by an orthogonal eigenoperator basis and is therefore free
from the transient amplification associated with defective or highly
non-orthogonal spectra.

\subsection{Modal Compressibility of Normal Liouvillian Dynamics}

The evolution of each modal amplitude is completely independent of all others. In particular, orthogonality of the eigenoperators prevents cross-mode amplification, while the absence of Jordan blocks eliminates algebraic growth factors that would otherwise accompany defective spectra. Consequently, the long-time dynamics are determined solely by the subset of modes possessing the smallest decay rates. Rapidly decaying modes become exponentially suppressed and may be truncated without introducing hidden pseudospectral amplification effects.

To quantify this compression, define the $\epsilon$-effective mode rank as
\begin{equation}
	R_\epsilon(t)= \min \bigg\{m:\sum_{\alpha=1}^{m}|c_\alpha(t)|^2 \ge (1-\epsilon) \sum_{\beta}|c_\beta(t)|^2 \bigg\}.
\end{equation}
The quantity $R_\epsilon(t)$ measures the number of Liouvillian modes required to represent the unconditional state to accuracy $\epsilon$. Since each coefficient decays independently according to its physical relaxation rate, the effective dimension of the dynamics is determined entirely by the density of slowly decaying modes rather than by transient mode mixing.

This result should not be confused with the behavior of individual quantum trajectories. We will see in \srf{subsec:single_trajectory_dynamics} for a given unraveling, the trajectory coefficients $a_\alpha(t)$ generally remain coupled through the stochastic jump process, \erf{eqn:sme:coeff}, even when the underlying Liouvillian is normal. The simplification occurs only after ensemble averaging, where the jump-induced couplings cancel and the unconditional coefficients satisfy the decoupled evolution equation above. Normality therefore guarantees modal compressibility at the ensemble level, while the trajectory-level manifestations of non-normality are constrained through the steady-state balance relations established in the preceding sections.

\section{Wave-Function Dynamics: roles of Smooth Evolution and Jump Events}
\label{subsec:wavefunction_geometry}

The steady-state balance expression presented in \srf{subsec:ss:balance} establishes that Liouvillian normality generally emerges from a precise cancellation between the commutation contributions of the smooth, jump, and mixed contributions. This result, however, is formulated entirely in Liouville space and does not immediately reveal how the cancellation manifests at the level of individual quantum trajectories. To develop a more intuitive picture, we now examine the evolution of wave functions in the underlying system's  Hilbert space.

Our goal is to understand how the two constituents of the unraveling act on trajectory space. We therefore consider two initially orthogonal states, $\vert\phi_1(0)\rangle$ and $\vert\phi_2(0)\rangle$, satisfying $\langle\phi_1(0)\vert\phi_2(0)\rangle=0$, and study how their relative orientation evolves under the no-jump dynamics and under stochastic jump events. As we shall see, the smooth evolution continuously reshapes trajectory orientations in Hilbert space, while jump events induce discrete, state-dependent transformations of the same geometric structure, and it is the interplay between the two that prevents trajectory-level non-orthogonality from surviving in the ensemble-averaged dynamics.

The interpretation developed below is most relevant when the individual self-commutators possess nonvanishing steady-state expectation values. In this situation, the mixed contribution is required to cancel the non-normality generated by the individual unraveling components. In the special case where these expectations vanish, the balance relation is satisfied trivially and no compensating distortion between the smooth and jump components is required.

\subsubsection{Smooth evolution analysis}

To understand how the steady-state balance identity manifests at the level of individual trajectories, we first consider the deterministic evolution between stochastic jump events. During a no-jump interval, an unnormalized wave function evolves according to the effective non-Hermitian Hamiltonian $\hat H_{\rm eff}$, introduced in Eq.~\eqref{eqn:nonH:Ham}. The corresponding state-vector dynamics is
\begin{equation}
	|\phi_m(t)\rangle = e^{-i \hat H_{\rm eff}t}|\phi_m(0)\rangle .
\end{equation}
To quantify the geometric action of the smooth evolution, we examine the overlap
\begin{equation}
	O(t) \equiv \langle \phi_1(t)|\phi_2(t)\rangle .
\end{equation}
Differentiating with respect to time gives
\begin{equation}
	\dot{O}(t) = \langle\phi_1(t)|(i[\hat H_{\rm eff}^{\dagger} - \hat H_{\rm eff}]) |\phi_2(t)\rangle.
\end{equation}
The first derivative depends only on the anti-Hermitian part of the effective Hamiltonian and therefore isolates the dissipative contribution of the no-jump evolution. For the effective Hamiltonian~(\ref{eqn:nonH:Ham}), this term is generated by the positive operator $\sum_k \hat L_k^\dagger \hat L_k$ and governs the attenuation of unnormalized trajectory amplitudes, while the Hermitian part contributes only norm-preserving coherent evolution. Taking a second derivative separates the purely dissipative contraction from the genuinely non-normal contribution. A straightforward calculation yields
\begin{equation}
	\ddot{O}(t) = -\langle\phi_1(t)| \Big[ (\hat H_{\rm eff}^{\dagger}-\hat H_{\rm eff})^2 + [\hat H_{\rm eff},\hat H_{\rm eff}^{\dagger}] \Big] |\phi_2(t)\rangle .
\end{equation}
The first term is present for any non-Hermitian evolution and reflects the irreversible loss associated with the no-jump dynamics. The second term depends explicitly on the non-normality of the effective Hamiltonian and vanishes only when $\hat H_{\rm eff}$ is normal. Consequently, the curvature of the overlap dynamics is directly sensitive to the non-normal geometry of the no-jump evolution.

The origin of this effect becomes particularly transparent in the eigenbasis of the effective Hamiltonian. For a non-normal operator, the right eigenvectors
$\{|r_m\rangle\}$ are generally non-orthogonal and satisfy the biorthogonality relation
\begin{equation}
	\langle \ell_m | r_n\rangle = \delta_{mn},
\end{equation}
with the left eigenvectors $\{|\ell_m\rangle\}$.
The spectral decomposition therefore takes the form
\begin{equation}
	\hat H_{\rm eff} = \sum_m \lambda_m |r_m\rangle \langle \ell_m|,
\end{equation}
where
\begin{equation}
	\lambda_m = \varepsilon_m - \frac{i}{2}\gamma_m,
\end{equation}
are complex eigenvalues.
Expanding two arbitrary initial states as
\begin{equation}
	|\phi_j(0)\rangle=\sum_m c_{jm} |r_m\rangle, \quad {\rm for}\;\, j \in \{1,2\}, 
\end{equation}
their no-jump evolution becomes
\begin{subequations}
\begin{equation}
	|\phi_j(t)\rangle = \sum_m c_{jm} \,e^{-i \varepsilon_m t}\, e^{-\gamma_m t/2}\, |r_m\rangle.
\end{equation}
\end{subequations}
Let $\gamma_0$ denote the smallest decay rate. On time scales satisfying $(\gamma_m -\gamma_0)t \gg 1$ for all $m>0$, but remaining within a given no-jump interval, all faster-decaying components become exponentially suppressed and the evolution is dominated by the corresponding right eigenvector $|r_0\rangle$. After normalization\footnote{Since the no-jump evolution is naturally described by unnormalized state vectors, the norm decreases according to the survival probability. To isolate the evolution of the state direction, we therefore consider the normalized overlap.},
\begin{equation}
	\frac{|\langle\phi_1(t)|\phi_2(t)\rangle|}{\|\phi_1(t)\| \,\|\phi_2(t)\|} \approx 1.
\end{equation}
Thus, irrespective of their initial orientation, generic no-jump trajectories become progressively aligned with the slowest-decaying mode. This alignment results from the spectral filtering induced by the non-Hermitian effective Hamiltonian: rapidly decaying components are exponentially suppressed, leaving the longest-lived mode dominant over sufficiently long no-jump intervals.
When $\hat H_{\rm eff}$ is non-normal, the non-orthogonality of its eigenvectors \emph{further} modifies the approach to this asymptotic alignment\footnote{Note that even if $\hat H_{\rm eff}$ is perfectly normal, spectral filtering still aligns every no-jump state with the slowest-decaying eigenvector. What is unique to non-normality is \emph{how} the trajectory approaches that asymptotic direction, because the eigenvectors are no longer orthogonal.}. The commutator
$[\hat H_{\rm eff},\hat H_{\rm eff}^\dagger]$ therefore quantifies the genuinely non-normal contribution to the overlap dynamics and provides the trajectory-level counterpart of the smooth-sector commutator appearing in the steady-state balance relation.

\subsubsection{Stochastic Jump Action and smooth--Jump Interference}
The modification of trajectory-state overlaps is not unique to the smooth evolution. Quantum jumps also alter the relative orientation of states within the trajectory ensemble. Consider a jump event occurring through channel $k$ at time $t$. The trajectory states are transformed according to

\begin{equation}
	|\phi_1^{\rm jump}\rangle = \hat L_k|\phi_1(t)\rangle, \qquad |\phi_2^{\rm jump}\rangle = \hat L_k|\phi_2(t)\rangle .
\end{equation}
The overlap immediately after the jump is therefore
\begin{equation}
	\label{eq:jump_overlap} \langle \phi_1^{\rm jump}|\phi_2^{\rm jump}\rangle = \langle\phi_1(t)| \hat L_k^\dagger \hat L_k |\phi_2(t)\rangle .
\end{equation}
Equation~\eqref{eq:jump_overlap} shows that the jump process acts through the positive operator $\hat L_k^\dagger \hat L_k$, which generally modifies the relative orientation of trajectory states in Hilbert space. Unless $\hat L_k^\dagger \hat L_k$ is proportional to the identity, the jump update is therefore nonunitary and can change the relative overlap between states within the trajectory ensemble.

This behavior parallels the smooth evolution discussed above. The deterministic drift is generated by the non-Hermitian propagator $\exp({-i \hat H_{\rm eff}t})$, while the jump dynamics is generated by the operators $\hat L_k$. Both are nonunitary transformations and, at the superoperator level, their associated self-commutators quantify the corresponding smooth and jump contributions to non-orthogonal mode mixing induced by stochastic map.

The significance of the steady-state balance relation derived in \srf{subsec:ss:balance} is that these two sources of non-normality do not act independently. Taken together, the smooth evolution continuously reshapes trajectory overlaps between jumps, while the jump events induce discrete overlap transformations. The steady-state balance proposition shows that these two mechanisms are not independent: although each may generate non-normal trajectory dynamics on its own, their combined effect is constrained by the mixed smooth--jump interference term so that the full Liouvillian remains normal.

\section{Single-Trajectory Dynamics in the Liouvillian Eigenbasis} \label{subsec:single_trajectory_dynamics}
The orthogonality of the Liouvillian eigenoperators has an immediate consequence for the ensemble-averaged dynamics: distinct Liouvillian eigenmodes evolve independently at the level of the unconditional state.
However, the situation is different at the level of an individual quantum trajectory. Let
\begin{equation}
	\rho_c(t) = \sum_\alpha a_\alpha(t)\rho_\alpha
\end{equation}
denote the conditioned (normalized) state associated with a single realization of the jump process. By construction,
\begin{equation} \label{eqn:ME:coef:avg}
	c_\alpha(t)=\mathbb{E}[a_\alpha(t)],
\end{equation}
where the expectation value is taken over the trajectory ensemble. For the normalized jump unraveling,
\begin{equation} \label{eqn:sme:coeff}
	d\rho_c =\left(\mathcal{S}(\rho_c)+\sum_k \wp_k\,\rho_c\right)dt+
	\sum_k dN_k \left( \frac{\mathcal{J}_k(\rho_c)} {\wp_k}-\rho_c \right),
\end{equation}
projection onto the Liouvillian eigenbasis yields\footnote{Utilizing the definition of the superoperator adjoint, alongside the normal operator eigenvalue relation $\mathcal{L}^\dagger(\rho_\alpha) = \Lambda_\alpha^* \rho_\alpha$, the linear mapping $\text{Tr}\big(\rho_\alpha^\dagger \mathcal{L}(\rho_{c})\big)$ simplifies exactly to $ \Lambda_\alpha a_\alpha(t)$.}
\begin{equation}
	da_\alpha
	=\Big[\wp_\alpha a_\alpha- \sum_k \mho_{\alpha k}\Big]dt + \sum_k dN_k \left[\frac{\mho_{\alpha k}}{\wp_k}-a_\alpha \right],
\end{equation}
where
\begin{align}
	\wp_\alpha (t) &= \Lambda_\alpha + \sum_k \wp_k \equiv \Lambda_\alpha + \sum_k\mathrm{Tr}\big(\mathcal{J}_k(\rho_c)\big), \\
	\mho_{\alpha k} &= \mathrm{Tr}\big(\rho_\alpha^\dagger \mathcal{J}_k(\rho_c)\big).
\end{align}
Unlike the deterministic coefficients $c_\alpha(t)$, the trajectory coefficients $a_\alpha(t)$ are generally coupled. The coupling originates from the nonlinear jump update and depends on the instantaneous conditioned state. Consequently, normality of the Liouvillian does not imply independent modal evolution along individual stochastic realizations.

Instead, normality constrains the dynamics only after ensemble averaging. Although the jump process continuously mixes Liouvillian modes within each trajectory, these stochastic couplings cancel in the ensemble average, so that the unconditional coefficients evolve diagonally, recovering \erf{eqn:ME:coef} and ensuring that the orthogonal Liouvillian modes remain dynamically independent at the level of the master equation.

\section{Monte Carlo Efficiency of Normal Liouvillians}
\label{subsec:variance_bounds}

The absence of exceptional points establishes spectral stability of the single-copy Liouvillian. We now show that global normality also constrains the statistical efficiency of stochastic quantum trajectory simulations. Specifically, we analyze the intrinsic trajectory variance associated with physical observables and show that it remains bounded at long times, preventing the sampling inflation that typically accompanies non-normal open-system dynamics.

Let $\hat A$ be a Hermitian observable and define its single measurement outcome as
\begin{equation} \label{eqn:traj:ens:avg}
	\hat A_j(t) = \mathrm{Tr}[\hat A\,\rho_{c_j}(t)],
\end{equation}
where $\rho_{c_j}(t)$ denotes a normalized conditional state generated by a chosen unraveling of the Lindblad equation and indexed by $j$. A Monte Carlo simulation with $N$ independent trajectories produces realizations $\{\hat A_j(t)\}_{j=1}^{N}$ and estimates the ensemble expectation value through

\begin{equation} \label{eqn:MC:ens:avg}
	\bar A_N(t)=\frac{1}{N}\sum_{i=1}^{N} \hat A_j(t).
\end{equation}
The statistical uncertainty of this estimator is determined by the trajectory variance
\begin{equation} \label{eq:path_variance_def}
	\sigma_{\rm traj}^2(t) \equiv \mathrm{Var}\left[\hat A(t)\right] = \mathbb{E}\left[\hat A(t)^2\right] - \mathbb{E}\left[\hat A(t)\right]^2.
\end{equation}
Throughout this section we distinguish between two ensemble averages. The notation
$\mathbb E[\cdot]$ denotes averaging over the normalized conditional-state trajectories generated by the physical unraveling, whereas $\tilde{\mathbb E}[\cdot]$ denotes averaging over the corresponding unnormalized conditional-state ($\tilde{\rho}_c$) unraveling. Although these averages are taken with respect to different trajectory measures, they reproduce the same unconditional density operator,
\begin{equation}
	\rho(t) = \mathbb E[\rho_c(t)] = \tilde{\mathbb E}[\tilde\rho_c(t)].
\end{equation}
For independent trajectories, the Central Limit Theorem gives
\begin{equation} \label{eqn:var:clt}
	\mathrm{Var}\left[\hat{\bar A}_N(t)\right]=\frac{\sigma_{\rm traj}^2(t)}{N}.
\end{equation}
Therefore, achieving a fixed statistical accuracy $\epsilon$ requires a trajectory number scaling as
\begin{equation} \label{eqn:traj:no:scale}
	N\sim \frac{\sigma_{\rm traj}^2(t)}{\epsilon^2}.
\end{equation}
Consequently, bounding the long-time behavior of $\sigma_{\rm traj}^2(t)$ is equivalent to determining whether the Monte Carlo sampling cost remains constant or grows with simulation time.

\subsection{Second-Moment Dynamics in Doubled Liouville Space}
The trajectory variance defined above is a second-moment quantity. While the first moment is directly reproduced by the Lindblad master equation,
\begin{equation}
	\mathbb{E}\left[\hat A(t)\right]=
	 \mathbb{E}\left[\mathrm{Tr}\left(\hat A\,\rho_c(t)\right)\right]=
	\mathrm{Tr}\left(\hat A\,\rho(t)\right),
\end{equation}
the second moment contains additional information about fluctuations between individual stochastic realizations:
\begin{equation}
	\mathbb{E}\left[\hat A(t)^2\right]=
	\mathbb{E}\left[
	\mathrm{Tr}\left(\hat A\,\rho_c(t)\right) \,
	\mathrm{Tr}\left(\hat A\,\rho_c(t)\right)
	\right].
\end{equation}
To analyze this second moment, we introduce a doubled Liouville-space representation. Using the tensor-product trace identity,

\begin{equation}
	\mathrm{Tr}\left(\hat A\,\rho_c\right) \mathrm{Tr}\left(\hat A\,\rho_c\right) = \mathrm{Tr} \left[\left(\hat A\otimes \hat A\right) (\rho_c\otimes\rho_c)\right],
\end{equation}
the second moment can be written as
\begin{equation}
	\mathbb{E}\left[\hat A(t)^2\right]= \mathrm{Tr} \left[ \left(\hat A\otimes \hat A\right) \bm{\varrho}(t) \right],
\end{equation}
where the weighted doubled state is defined as
\begin{equation} \label{eqn:rho:squared:normal}
	\bm{\varrho}(t) \equiv \mathbb{E} \left[ \rho_c(t)\otimes\rho_c(t) \right].
\end{equation}
Therefore, bounding the trajectory variance is equivalent to controlling the long-time behavior of this doubled object.

The difficulty is that normalized quantum trajectories do not obey a linear stochastic evolution. The normalization condition, $\mathrm{Tr}[\rho_c(t)]=1,$ introduces state-dependent nonlinear terms into the stochastic equation of motion. Consequently, the evolution equation for $\bm{\varrho}(t)$ does not generally close into a linear Liouville-space generator.

To expose the underlying linear structure, we temporarily introduce the corresponding unnormalized trajectory state $\tilde{\rho}_c(t)$, related to the normalized state through
\begin{equation} \label{eqn:unnorm:rho}
	\tilde{\rho}_c(t) = w_c(t)\rho_c(t),
\end{equation}
where 
\begin{equation} \label{eqn:weight:unnorm}
w_c(t)=\mathrm{Tr}[\tilde{\rho}_c(t)],
\end{equation}
is the stochastic trajectory weight. The unnormalized state evolves linearly under the stochastic unraveling, allowing the doubled operator
\begin{equation} \label{eqn:rho:squared:unnorm}
	\tilde{\bm{\varrho}}(t) \equiv \tilde{\mathbb{E}} \left[ \tilde{\rho}_c(t)\otimes\tilde{\rho}_c(t) \right]
\end{equation}
to satisfy a closed linear evolution equation of the form given in \erf{eqn:EOM:unnorm:R}.

\subsection{Stability of the Doubled Liouvillian} \label{subsec:doubled_liouvillian_stability}

The unnormalized representation preserves the unconditional density operator through the weighted ensemble average, while allowing the tensor product quantity in \erf{eqn:rho:squared:unnorm} to evolve according to a linear doubled Liouvillian. Using the unnormalized jump stochastic master equation derived in Appendix~\ref{app:doubled_liouvillian}, the doubled state satisfies
\begin{equation} \label{eqn:EOM:unnorm:R}
	\dot{\tilde{\bm{\varrho}}}(t) = \widetilde{\mathcal L}^{(2)}\, \tilde{\bm{\varrho}}(t),
\end{equation}
where
\begin{equation}
	\widetilde{\mathcal L}^{(2)}=\widetilde{\mathcal L}^{(2)}_0+\widetilde{\mathcal W}.
\end{equation}
The uncoupled contribution is
\begin{equation}
	\widetilde{\mathcal L}^{(2)}_0 = \mathcal L\otimes\mathcal I + \mathcal I\otimes\mathcal L ,
\end{equation}
which represents independent evolution of the two copies. The remaining term,
\begin{equation}
	\widetilde{\mathcal W} = \sum_{k=1}^K (\mathcal J_k - \mathcal{I)} \otimes (\mathcal J_k - \mathcal{I)},
\end{equation}
contains all noise-induced correlations between the two tensor copies. The importance of this decomposition is that the first term inherits the spectral properties of the original Liouvillian. In particular, if $	[\mathcal L,\mathcal L^\dagger]=0$, then $\widetilde{\mathcal L}^{(2)}_0$ is also normal 
\begin{equation}
	\left[\widetilde{\mathcal L}^{(2)}_0,\widetilde{\mathcal L}^{(2)\dagger}_0\right]=0.
\end{equation}
Therefore, the spectral theorem guarantees the existence of an orthonormal tensor-product eigenbasis $\{\rho_\alpha\otimes\rho_\beta\}$ satisfying
\begin{equation}
	\widetilde{\mathcal L}^{(2)}_0
	(\rho_\alpha\otimes\rho_\beta) = \Lambda_{\alpha\beta} (\rho_\alpha\otimes\rho_\beta),
\end{equation}
with eigenvalues $\Lambda_{\alpha\beta} = \Lambda_\alpha+\Lambda_\beta$. The corresponding propagator therefore admits the orthogonal spectral
representation
\begin{equation}
	e^{t\widetilde{\mathcal L}^{(2)}_0} = \sum_{\alpha,\beta}
	e^{(\Lambda_\alpha+\Lambda_\beta)t}\,\Pi_{\alpha\beta},
\end{equation}
where $\Pi_{\alpha\beta}$ are mutually orthogonal spectral projectors. For a normal operator, the operator norm of the propagator is determined
entirely by the spectral abscissa,
\begin{equation}
	\left\| e^{t\widetilde{\mathcal L}^{(2)}_0} \right\| = \max_{\alpha,\beta}\; e^{\left[\mathrm{Re} (\Lambda_\alpha+\Lambda_\beta) \right] t}.
\end{equation}
Because every Lindbladian generates a completely positive trace-preserving
contraction semigroup, its spectrum lies in the closed left half-plane,
implying $\mathrm{Re}(\Lambda_{\alpha\beta})\le 0$. Consequently, $	\left\|\exp\left[t\widetilde{\mathcal L}^{(2)}_0\right]\right\|\le 1$.
Hence the uncoupled doubled dynamics is contractive and cannot exhibit
transient amplification. Any possible growth of second moments must therefore
originate from the stochastic coupling term
$\widetilde{\mathcal W}$.

The absence of transient amplification in the uncoupled doubled dynamics does not, by itself, guarantee stability of the full second-moment evolution. The stochastic unraveling introduces correlations between the two copies of the system, which are encoded in the coupling contribution $\widetilde{\mathcal W}$. To analyze this contribution, it is useful to recall why the unnormalized representation was introduced.

For normalized trajectories, the stochastic state satisfies a nonlinear evolution equation due to the instantaneous normalization after each jump event. Consequently, although the physical density matrix is recovered through the ensemble average, the second moment in \erf{eqn:rho:squared:normal}
does not evolve under a closed linear superoperator. This prevents a direct spectral analysis of the second-moment dynamics.

The unnormalized representation, \erf{eqn:rho:squared:unnorm}, avoids this difficulty.
The advantage of this representation is that the stability of the weighted second moment can be studied using standard Liouville-space techniques. However, the unnormalized state carries a stochastic trajectory weight, \erf{eqn:weight:unnorm}, such that the physical normalized trajectory is recovered as given in \erf{eqn:unnorm:rho}. Therefore, the stability of $\tilde{\bm{\varrho}}(t)$ must be interpreted together with the evolution of the trajectory weights.

The connection between the doubled state and the weight fluctuations follows directly from the identity sector of the doubled space. Using the tensor-product trace identity $\mathrm{Tr}(X\otimes Y) = \mathrm{Tr}(X) \,\mathrm{Tr}(Y)$, which holds for arbitrary linear operators $X$ and $Y$, we obtain 
\begin{equation}
	\mathrm{Tr} \left[ \tilde{\bm{\varrho}}(t) \right] = \tilde{\mathbb{E}} \left[ w_c(t)^2 \right].
\end{equation}
Thus, any growth of the doubled-space trace norm generated by $\widetilde{\mathcal W}$ corresponds directly to fluctuations of the stochastic trajectory weights rather than to an instability of the physical density matrix evolution (\arf{appn:wieght:fluc}).
To make this explicit, taking the trace of the doubled equation of motion, \erf{eqn:EOM:unnorm:R}, gives
\begin{equation}
	\frac{d}{dt} \tilde{\mathbb{E}} \left[w_c(t)^2 \right]=\mathrm{Tr} \left[ \widetilde{\mathcal W} \tilde{\bm{\varrho}}(t) \right],
\end{equation}
where the uncoupled contribution vanishes because the original Lindblad generator is trace preserving. Therefore, the entire deviation from independent contractive evolution is contained in the stochastic correlation part.

This separation clarifies the distinct roles of the two contributions to the doubled dynamics: the uncoupled Liouvillian governs the geometric evolution in doubled Liouville space, while the correlation superoperator governs fluctuations of the stochastic trajectory weights.

Tracing the doubled-space evolution shows that the correlation superoperator
$\widetilde{\mathcal W}$ exclusively governs the growth of trajectory-weight
fluctuations. In particular,
\begin{equation}
	\frac{d}{dt}\tilde{\mathbb E}\left[w_c^2\right]=
	\tilde{\mathbb E} \left[ w_c^2 \sum_k (r_k-1)^2 \right],
\end{equation}
where $r_k=\mathrm{Tr}(L_k^\dagger L_k\rho_c)$ are the instantaneous jump
rates of the normalized trajectory. Since $0\le r_k\le|L_k^\dagger L_k|$
for finite-dimensional systems, the growth rate of
$\tilde{\mathbb E}[w_c^2]$ remains bounded by a finite constant determined by
the jump operators. Consequently, trajectory-weight fluctuations can grow at
most exponentially in time and cannot exhibit super-exponential amplification.
The derivation is given in Appendix~\ref{appn:wieght:fluc}.
In general, the exponential bound does not imply uniformly bounded weight fluctuations. In what follows, we therefore additionally assume that the chosen unraveling satisfies
\begin{equation}
	\sup_t \tilde{\mathbb E} \left[w_c(t)^2 \right] < \infty,
\end{equation}
implying the trajectory-weight component cannot generate unbounded amplification of the observable second moment. Combined with the absence of transient
amplification in the uncoupled doubled dynamics, this removes the two
mechanisms identified above that could otherwise inflate trajectory
fluctuations.

The final step is to connect this result to the physical observable variance. Since
\begin{equation}
	\hat A(t)=\mathrm{Tr}[\hat A\rho_c(t)]={\mathrm{Tr}[\hat A\tilde{\rho}_c(t)]}/{w_c(t)},
\end{equation}
the normalized second moment differs from the linear doubled moment only through the trajectory weights. Under these additional regularity assumptions on the trajectory weights, the stability of the doubled evolution is inherited by the normalized observable moments.

\subsection{Physical Implications}
The preceding analysis establishes rigorous properties of the doubled Liouvillian and trajectory-weight dynamics. We now discuss the implications of these results for the variance of observables and the efficiency of Monte Carlo trajectory simulations under the additional regularity assumptions introduced below.

\subsubsection{Connection to Trajectory Variance}
\label{subsec:observable_variance_connection}

The doubled-space construction provides a linear evolution equation for the
weighted second moment of the stochastic trajectories. The remaining step is
to relate this quantity to the variance of observables evaluated on
normalized trajectories.

The normalized trajectory is recovered from the unnormalized state through \erf{eqn:unnorm:rho}. Therefore, for a Hermitian observable $\hat A$, the second moment can be written as
\begin{equation} 	\label{eq:normalized_second_moment_weight}
	\mathbb{E}\left[\hat A(t)^2\right] = \mathbb{E} \left[
	{\mathrm{Tr} \left[ (\hat A\otimes \hat A) (\tilde{\rho}_c\otimes\tilde{\rho}_c) \right]}/{w_c^2(t)} \right],
\end{equation}
where the numerator is the trajectory-level tensor-product observable whose ensemble average is governed by the doubled Liouvillian $\tilde{\mathcal{L}}^{(2)}$, whereas the denominator arises solely from the normalization map from unnormalized to normalized trajectories.
Equation~(\ref{eq:normalized_second_moment_weight})
shows that the physical trajectory variance receives contributions from two
distinct mechanisms. The first is amplification associated with the
doubled propagator. For globally normal Liouvillians this mechanism is absent
in the uncoupled doubled dynamics because the corresponding generator possesses
an orthogonal eigenbasis and cannot exhibit transient growth. 

The second is the stochastic reweighting induced by trajectory weights, whose evolution is governed entirely by the correlation superoperator $\widetilde{\mathcal W}$. Importantly, growth of the trajectory weights alone does not imply growth of
the physical trajectory variance, because the normalization factor appears in
the denominator of the observable moments. Therefore, the relevant quantity is
not the absolute magnitude of $w_c$, but rather the conditioning of the
normalization process and the absence of anomalously small trajectory weights.

Therefore, the relevant requirement for stable trajectory sampling is not
the boundedness of the weight second moment itself, but the absence of
pathological inverse-weight fluctuations. For unravelings with well-conditioned
trajectory weights, the stability of the doubled evolution transfers to the
normalized observable moments.

 Consequently, if the trajectory normalization remains well conditioned at long times, so that inverse-weight fluctuations do not introduce additional amplification, then the observable second moments inherit the stability of the doubled evolution. Under these assumptions, the trajectory variance remains bounded, preventing the growth of statistical fluctuations associated with long-time sampling.

\subsubsection{Implications for Monte Carlo sampling}

The preceding analysis establishes that, for a normal Liouvillian,
the principal source of trajectory-sampling inflation associated with
non-normal transient amplification is absent. The orthogonal spectral structure of the Liouvillian prevents transient amplification in the uncoupled doubled dynamics, while the remaining stochastic contribution enters exclusively through the trajectory-weight fluctuations.

For an observable $\hat A$, the Monte Carlo estimator constructed from
$N$ independent trajectories satisfies, \erf{eqn:var:clt}. Consequently, achieving a prescribed statistical accuracy $\epsilon$ requires a trajectory number scaling as \erf{eqn:traj:no:scale}. The analysis of the doubled Liouvillian shows that, under the conditions established above, the trajectory variance remains bounded at long times.

Under the regularity assumptions above, normality removes one important mechanism by which the sampling cost can increase with simulation time.
Therefore, the computational effort associated with ensemble sampling
remains asymptotically stable, in contrast to non-normal systems where
transient amplification can inflate trajectory fluctuations and increase
sampling requirements.

This result should be interpreted as a statement about the statistical
efficiency of trajectory ensembles rather than the dynamics of individual
trajectories. Although the coefficients of a single stochastic realization
generally remain coupled through the jump process, these trajectory-level couplings are absent from the ensemble-averaged dynamics, recovering the independent modal evolution of the unconditional density operator.

\begin{figure}[t]
	\centering
	\begin{tikzpicture}[scale=0.65,transform shape]
		
		\definecolor{OuterBlue}{RGB}{220,235,255}
		\definecolor{GreenBlob}{RGB}{215,245,215}
		\definecolor{OrangeBlob}{RGB}{255,232,200}
		
		
		\filldraw[
		fill=OuterBlue,
		draw=blue!70!black,
		line width=0.7pt,
		fill opacity=.45
		]
		(-5.9,-2.9)
		.. controls (-6.3,-0.2) and (-5.6,2.8)
		.. (-3.7,3.8)
		.. controls (-1.5,4.7) and (2.0,4.5)
		.. (4.8,3.0)
		.. controls (6.1,1.8) and (6.2,-1.2)
		.. (5.0,-2.9)
		.. controls (3.5,-4.1) and (0.0,-4.3)
		.. (-3.6,-3.8)
		.. controls (-5.0,-3.6) and (-5.9,-3.3)
		.. cycle;
		
		\node[font=\Large\bfseries] at (0,3.2)
		{Normal Lindbladians};
		
		\node[font=\Large\bfseries] at (0,2.45)
		{$\eta(\mathcal L)=0$};
		
		
\filldraw[
fill=GreenBlob,
draw=green!60!black,
line width=0.7pt,
fill opacity=.80
]
(-3.9,-1.0)
.. controls (-4.2,0.5) and (-3.5,1.7)
.. (-2.3,1.8)
.. controls (-0.0,1.7) and (0.5,0.6)
.. (-0.1,-0.6)
.. controls (-0.6,-1.7) and (-1.7,-2.1)
.. (-2.9,-2.0)
.. controls (-3.7,-1.9) and (-4.1,-1.5)
.. cycle;

\node[align=center,font=\large\bfseries]
at (-2.1,0.75)
{Hermitian};

\node[font=\Large\bfseries] at (-2.2,-0.15)
{$\mathcal L=\mathcal L^\dagger$};
		
		
\filldraw[
fill=OrangeBlob,
draw=orange!70!black,
line width=0.7pt,
fill opacity=.70
]
(-0.8,-1.3)
.. controls (-1.0,0.3) and (0.0,1.8)
.. (1.8,1.9)
.. controls (3.3,1.8) and (4.1,0.8)
.. (4.0,-0.7)
.. controls (3.8,-1.9) and (2.7,-2.3)
.. (1.1,-2.2)
.. controls (-0.1,-2.1) and (-0.8,-1.8)
.. cycle;

\node[align=center,font=\large\bfseries]
at (1.8,0.7)
{Structured\\dissipators};

\node[font=\Large\bfseries] at (2.0,-0.45)
{$\sum_k \hat L_k^\dagger \hat L_k\propto \hat I$};
\node[font=\Large\bfseries] at (1.8,-1.5)
{$\mathcal{D}_{\mathcal{S}}=0$};

\node[align=center,font=\Large]
at (-0.3,-3.0)
{$\mathcal{D}_{\mathcal{S}},\mathcal{D}_{\mathcal{J}},\mathcal{D}_{\mathcal{SJ}}\neq 0\,$
	 (Generic case)};
		
	\end{tikzpicture}
	
	\caption{Conceptual organization of the normal Lindbladians considered in this work. The two colored regions represent analytically tractable subclasses whose intersection is nonempty.} \label{fig:normal:L}
\end{figure}

\section{Special scenarios} \label{sec:specific:scenarios}
The general results developed in the preceding sections apply to arbitrary
normal Lindbladians. It is nevertheless instructive to examine concrete
physical examples, both to illustrate the theoretical framework and to
highlight the additional structure present in many experimentally relevant
models. As will become apparent, several commonly encountered dissipative
systems belong to more restrictive subclasses of normal Lindbladians, such as
Hermitian generators or structured dissipative processes. In contrast,
constructing genuinely generic normal Lindbladians, for which the smooth,
jump, and mixed commutator contributions are all individually nonvanishing
while collectively satisfying the normality condition \erf{eq:operator_identity}, 
appears to be a nontrivial problem in its own right, \frf{fig:normal:L}. The examples below therefore illustrate both the general theory and the structural constraints commonly encountered in physically motivated models.

\subsection{Hermitian Lindbladian}

A particularly instructive special case arises when the Lindblad generator is Hermitian in Liouville space,
\begin{equation} \label{eqn:Hermitian:L}
	\mathcal L^\dagger=\mathcal L.
\end{equation}
Hermiticity is a stronger condition than normality and therefore automatically satisfies $\eta(\mathcal L) = [\mathcal L,\mathcal L^\dagger]=0$.
Consequently, all results established for globally normal Liouvillians remain valid in this limit.

Hermiticity imposes a stronger structural constraint than global normality,
since the coherent and dissipative contributions can no longer be chosen
independently. Therefore using \erf{eqn:GKSL}, the Hermiticity condition, \erf{eqn:Hermitian:L}, is equivalent to
\begin{equation}
	\sum_k \left(\hat L_k\rho \hat L_k^\dagger - \hat L_k^\dagger\rho\hat L_k \right) = i 2[\hat H,\rho],
\end{equation}
or alternatively
\begin{equation}
	(\mathrm{Im}\,\mathcal J)(\rho) = [\hat H,\rho],
\end{equation}
by introducing the anti-Hermitian part of the jump superoperator $\mathrm{Im}\,\mathcal J \equiv ({\mathcal J-\mathcal J^\dagger})/{2i}$. Thus, the Hamiltonian commutator is exactly reproduced by the anti-Hermitian component of the jump superoperator. In particular, if $\mathcal J$ is Hermitian then $[\hat H,\rho]=0$,
for every density operator $\rho$, implying that the Hamiltonian is
proportional to the identity. Consequently, the coherent and dissipative
contributions cannot be specified independently, but must satisfy an exact
operator identity.

An important subclass is therefore obtained when
$\hat H=0$ (or more generally $\hat H\propto\hat I$). In this case the
Hermiticity condition reduces simply to $\mathcal J^\dagger=\mathcal J$,
so that the dissipative contribution is itself Hermitian in Liouville space.
This includes, for example, pure dephasing models generated by Hermitian jump
operators, for which each $\hat L_k=\hat L_k^\dagger$.

The spectral structure simplifies considerably. Since $\mathcal L$ is Hermitian, all eigenvalues are real $\Lambda_\alpha \in \mathbb{R}$, and the Liouvillian admits a complete orthonormal eigenoperator basis satisfying
\begin{equation}
	\mathcal L(\rho_\alpha)
	=
	-\Gamma_\alpha \rho_\alpha,
\end{equation}
with $\Gamma_\alpha \in \mathbb{R}_{\ge 0}$. Expanding the unconditional density operator as in \erf{eqn:rho:expand:rho_a} the modal amplitudes evolve independently according to \erf{eqn:ME:coef:sol} with $\Lambda_\alpha = \Gamma_\alpha$.
Unlike a generic normal Liouvillian, which may support oscillatory modes through complex eigenvalues, a Hermitian Liouvillian generates purely relaxational dynamics. Each orthogonal mode decays independently without Liouville-space rotations or phase accumulation. Consequently, every timescale of the unconditional dynamics is determined solely by the relaxation spectrum $\{\Gamma_\alpha\}$, without effects associated with eigenoperator non-orthogonality or coherent Liouville-space rotations spectral phases.

This provides the simplest realization of the modal-compressibility arguments developed previously. Since no transfer of population can occur between orthogonal eigenoperators, the active information content of the unconditional state decreases monotonically as rapidly decaying modes are suppressed. The long-time dynamics is therefore governed entirely by the slowest-decaying terms of the spectrum. 

At the trajectory level, however, the situation remains more subtle. The stochastic coefficients associated with an individual quantum trajectory continue to satisfy the jump-coupled stochastic evolution equations derived in~\srf{subsec:single_trajectory_dynamics}. Consequently, individual trajectories may still exhibit mixing between Liouvillian eigenmodes through the stochastic jump process. Hermiticity therefore removes modal coupling from the ensemble-averaged dynamics but does not eliminate stochastic mode mixing at the level of a single realization. Since all Liouvillian eigenvalues are purely real, the deterministic contribution to each modal amplitude exhibits only exponential relaxation. The oscillatory phase evolution characteristic of generic normal Liouvillians is therefore absent, although stochastic jump events continue to couple different Liouvillian modes within an individual realization. As before, the independent modal evolution is recovered only after ensemble averaging, \erf{eqn:ME:coef:avg}.

The second-moment analysis also simplifies substantially. In the doubled Liouville space, the uncoupled generator possesses eigenvalues
\begin{equation}
	\Lambda_{\alpha\beta}
	=
	-(\Gamma_\alpha+\Gamma_\beta),
\end{equation}
which are purely real and nonpositive. The corresponding propagator is therefore a strict contraction,
\begin{equation}
	\left\|
	e^{t\widetilde{\mathcal L}^{(2)}_0}
	\right\|
	\le 1,
\end{equation}
precluding any transient amplification within the uncoupled second-moment dynamics. Hermitian Lindbladians therefore constitute the most restrictive subclass of globally normal generators. Both the unconditional dynamics and the uncoupled doubled-space evolution reduce to purely dissipative contractions along mutually orthogonal Liouville-space directions.

\subsubsection{Spin-1 pure dephasing}

As an illustrative example, consider a spin-1 system undergoing pure
dephasing,
\begin{equation}
	\hat H =0,
	\qquad
	\hat L=\sqrt{\gamma_{\rm d}}\,\hat S_z ,
\end{equation}
where $\gamma_{\rm d} \in \mathbb{R}^+$ is the dephasing rate and
\begin{equation}
	\hat S_z=
	\begin{pmatrix}
		1&0&0\\
		0&0&0\\
		0&0&-1
	\end{pmatrix}.
\end{equation}
Since $\hat L=\hat L^\dagger$, the jump superoperator $\mathcal J(\rho) = \gamma_{\rm d}\,\hat S_z\rho\hat S_z$ is Hermitian. Furthermore, because the Hamiltonian vanishes, the smooth generator is likewise Hermitian, and consequently $\mathcal D_{\mathcal J}=\mathcal D_{\mathcal S}=\mathcal D_{\mathcal SJ}=0$. Therefore, the Hermiticity condition, \erf{eqn:Hermitian:L}, is satisfied identically. The resulting master equation,
\begin{equation}
	\dot\rho = \gamma_{\rm d} \left(\hat S_z\rho\hat S_z - \frac12 \left\{ \hat S_z^2,\rho \right\} \right),
\end{equation}
contains no coherent evolution and therefore generates purely irreversible decay of the off-diagonal density-matrix elements. The Liouvillian spectrum is therefore entirely real and nonpositive\footnote{In the operator basis $\{\ket{m}\bra{n}\}$ where $m,n \in (1,0,-1)$ and $\ket{m}$ are eigenstate of $\hat S_z$, the Lindblad is expressed as $\mathcal L (\ket{m}\bra{n}) = \Gamma_{mn} \ket{m}\bra{n}$. Hence every operator is an eigenoperator of the Liouvillian with eigenvalue $\Gamma_{mn}=-\gamma_{\rm d}(m-n)^2/2$. The spectrum therefore consists of three stationary population modes, four coherences decaying at rate $\gamma_{\rm d}/2$, and two coherences decaying at rate $2\gamma_{\rm d}$.}, in agreement with the general discussion above. From the modal perspective, each orthogonal Liouvillian eigenoperator decays independently without any oscillatory evolution or transient
mixing. 

At the trajectory level, the stochastic unraveling remains nontrivial. Between jumps the conditioned state evolves under $\hat H_{\rm eff}=-i\gamma_{\rm d}\hat S_z^2/2$. The resulting non-Hermitian attenuation is removed by trajectory normalization, leaving no coherent rotation. The jump operation applies the Hermitian operator $\hat S_z$ followed by
normalization. Therefore, individual trajectories remain free of coherent
mode rotations, although the nonlinear normalization of stochastic updates
can still couple Liouvillian coefficients at the trajectory level.
The ensemble average removes these realization-dependent couplings and
recovers the independent exponential relaxation of the Hermitian
Liouvillian modes.

\subsection{Structured Dissipative Processes}

A second class of analytically tractable systems arises when the jump operators satisfy the uniform-loss condition
\begin{equation} \label{eqn:cond:strctured:L}
	\sum_k \hat L_k^\dagger \hat L_k
	= \Omega \hat I,
\end{equation}
where $\Omega \in \mathbb{R}^+$ is a constant. This structure appears whenever the total dissipation rate is independent of the instantaneous system state. Under this condition, the effective non-Hermitian Hamiltonian reduces to
\begin{equation}
	\hat H_{\rm eff} = \hat H-\frac{i}{2}\Omega \hat I,
\end{equation}
and consequently $[\hat H_{\rm eff},\hat H_{\rm eff}^\dagger]=0$.
Therefore, the smooth deterministic evolution generated by $\mathcal S$ is itself normal $\mathcal D_{\mathcal S}=0$. The normality condition therefore simplifies to
\begin{equation} \label{eqn:eta:structured:L}
	\eta(\mathcal{L}) = \|\mathcal D_{\mathcal J} + \mathcal D_{\mathcal SJ}\|.
\end{equation}
In contrast to the generic situation, where normality emerges through a three-way balance between smooth evolution, jump processes, and their interference, the present class of systems requires only a cancellation between the jump events components and those of the smooth-jump interference. Consequently, all sources of non-normality originate from the jump operators themselves.

The effective Hamiltonian also acquires a particularly simple propagator,
\begin{equation} \label{eqn:smooth:propag:struct:L}
	e^{-i \hat H_{\rm eff}t} = e^{-\Omega t/2} e^{-i \hat Ht},
\end{equation}
showing that the deterministic evolution consists of a unitary rotation accompanied by a uniform exponential attenuation. Consequently, \begin{equation}
	\left\| e^{-i \hat H_{\rm eff}t} \right\| = e^{-\Omega t/2}.
\end{equation}
To contrast this result with a general bound, we consider the following lemma regarding the spectral norm of a non-normal matrix exponential:
\begin{lemma} \label{lemma:smooth:propagator}
	Let $\hat H_{\rm{eff}} = \hat H - \frac{i}{2} \sum_k \hat L_k\dg \hat L_k$ be a diagonalizable, non-normal matrix with right-eigenvector matrix $V$ and diagonal eigenvalue matrix $D = \text{diag}(\lambda_1, \dots, \lambda_D)$, where $\lambda_m = \varepsilon_m - \frac{i}{2}\gamma_m$. The spectral norm of its propagator satisfies:
	\begin{equation}
		\label{eq:lemma_statement}
		\left\| e^{-i H_{\rm{eff}} \Delta t} \right\| \le \Bbbk_V e^{-\gamma_{\min}\Delta t/2},
	\end{equation}
	where $\Bbbk_V \equiv \parallel V \parallel \parallel V^{-1} \parallel$ is the eigenvector condition number, and $\gamma_{\min}$ is the minimum decay eigenvalue.
\end{lemma}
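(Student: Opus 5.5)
The plan is to diagonalize the propagator with the eigendecomposition and then apply submultiplicativity of the spectral norm. Since $\hat H_{\rm eff}$ is diagonalizable, we write $\hat H_{\rm eff}=VDV^{-1}$, where the columns of $V$ are the right eigenvectors $|r_m\rangle$ and the rows of $V^{-1}$ are the left eigenvectors $\langle \ell_m|$. This is consistent with the biorthogonality relation $\langle \ell_m|r_n\rangle=\delta_{mn}$ used in \srf{subsec:wavefunction_geometry}. The power series of the exponential then gives the similarity relation
\begin{equation}
e^{-i\hat H_{\rm eff}\Delta t}=V\,e^{-iD\Delta t}\,V^{-1},
\end{equation}
because $(VDV^{-1})^n=VD^nV^{-1}$ for every $n$.

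Next we apply submultiplicativity of the operator norm to get
\begin{equation}
\left\|e^{-i\hat H_{\rm eff}\Delta t}\right\|\le \|V\|\,\left\|e^{-iD\Delta t}\right\|\,\|V^{-1}\|=\Bbbk_V\left\|e^{-iD\Delta t}\right\|.
\end{equation}
The middle factor is diagonal, with entries $e^{-i\lambda_m\Delta t}=e^{-i\varepsilon_m\Delta t}e^{-\gamma_m\Delta t/2}$. The spectral norm of a diagonal matrix is the largest modulus of its entries, and the phase factor has modulus one, so
\begin{equation}
\left\|e^{-iD\Delta t}\right\|=\max_m e^{-\gamma_m\Delta t/2}=e^{-\gamma_{\min}\Delta t/2}
\end{equation}
for $\Delta t\ge 0$. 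Combining the two displays gives \erf{eq:lemma_statement}.

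The only step needing care is the claim that the maximum over $m$ is attained at the smallest decay rate $\gamma_{\min}$. This holds because $x\mapsto e^{-x\Delta t/2}$ is nonincreasing for $\Delta t\ge0$, which is the regime of a no-jump interval. The $\gamma_m$ are real, since $\gamma_m=-2\,\mathrm{Im}\,\lambda_m$. They are also nonnegative: the anti-Hermitian part of $\hat H_{\rm eff}$ is $-\tfrac12\sum_k\hat L_k\dg\hat L_k\le0$, so taking $\langle \ell_m|\cdot|r_m\rangle$ in the numerical range gives $\mathrm{Im}\,\lambda_m\le 0$. The bound itself does not depend on this sign, however.

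We will also remark that $\Bbbk_V\ge1$, with equality when $V$ can be chosen unitary, that is, when $\hat H_{\rm eff}$ is normal. In that case the bound is tight and reproduces $\|e^{-i\hat H_{\rm eff}t}\|=e^{-\Omega t/2}$ for the structured dissipators of \erf{eqn:cond:strctured:L}. This contrasts the structured case with the generic non-normal case, where the prefactor $\Bbbk_V$ can allow transient growth.
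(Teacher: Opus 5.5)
Your proof is correct and follows the paper's own argument: the similarity relation $e^{-i\hat H_{\rm eff}\Delta t}=Ve^{-iD\Delta t}V^{-1}$, submultiplicativity to pull out $\Bbbk_V$, and the diagonal-norm identity giving $e^{-\gamma_{\min}\Delta t/2}$. Stating $\Delta t\ge 0$ explicitly, which the paper leaves implicit, is a good addition. One small slip in your side remark (the bound does not depend on it): $\mathrm{Im}\,\lambda_m\le 0$ follows from the numerical-range element $\lambda_m=\langle r_m|\hat H_{\rm eff}|r_m\rangle/\langle r_m|r_m\rangle$, whereas $\langle \ell_m|\cdot|r_m\rangle$ does not lie in the numerical range, because $\ell_m\neq r_m$ in general.
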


The details of proof can be found in \arf{appn:proof:smooth:lemma}.
 In generic non-normal systems, $\Bbbk_V>1$ allows transient amplification even when all eigenmodes are individually decaying. Here, since $\hat H_{\rm eff}$ is normal, its eigenvectors are orthogonal, implying $\Bbbk_V=1$. Furthermore all decay rates coincide, $\gamma_m=\Omega$, so that $\gamma_{\min}=\Omega$. The general bound of Lemma is therefore saturated exactly. Hence, the deterministic propagator is a strict contraction for all times, and cannot generate the non-orthogonal mode mixing associated with pseudospectral amplification.

The condition in \erf{eqn:cond:strctured:L} admits an especially transparent interpretation at the trajectory level. Since every pure state experiences the same total loss rate,
\begin{equation}
	\sum_k \wp_k(t)=\Omega,
\end{equation}
the total jump rate becomes completely state independent. Waiting times are therefore distributed according to a Poisson process with fixed rate $\Omega$, eliminating state-dependent variations in the jump statistics.

Moreover, the conditioned stochastic master equation undergoes a further simplification yielding
\begin{equation}
	d\rho_c = -i[\hat H,\rho_c]\,dt+ \sum_k dN_k\left(\frac{\mathcal{J}_k(\rho_c)}{\wp_k}-\rho_c\right).
\end{equation}
Consequently, the deterministic evolution of every trajectory reduces to purely unitary motion generated by $H$, while all stochasticity is confined to the jump events. In the trajectory-coefficient equations, this cancellation appears through the exact compensation between the dissipative drift and the normalization term\footnote{The deterministic term coefficient must be precisely the projection of the unitary commutator, that is $-i \,{\rm Tr} (\rho_\alpha^\dagger([\hat H, \rho_c]))$.}. Thus, for structured dissipative processes satisfying \erf{eqn:cond:strctured:L}, the trajectory dynamics separates into deterministic unitary transport punctuated by stochastic jumps occurring at a constant state-independent rate.

This cancellation provides a particularly transparent example of the interference mechanisms discussed throughout this work. Although the
underlying open-system dynamics remains dissipative, the non-Hermitian drift generated by the loss operators is exactly cancelled by the normalization contribution of the conditioned evolution. As a result, dissipation manifests exclusively through stochastic jump events, while the deterministic trajectory segments evolve unitarily.

An even simpler limit is obtained when the coherent Hamiltonian vanishes. The effective Hamiltonian then becomes proportional to the identity, $	\hat H_{\rm eff} = -\frac{i}{2}\Omega \hat I$, and the smooth evolution reduces to a uniform contraction. In this purely dissipative regime, all nontrivial dynamics originates from the jump maps. The normality condition becomes entirely a statement about the jump subspace, $\eta(\mathcal L)=\|\mathcal D_{\mathcal J}\|$, since both the smooth contribution and the smooth-jump interference term vanish identically.

\subsubsection{Driven thermal qubit}
Consider a two-level system with Hamiltonian $\hat H_0 = \frac{\nu}{2}\hat \sigma_z$, where $\nu$ denotes the transition frequency between the two energy levels and $\hat \sigma_z$ is the Pauli-$z$ operator. The system is coupled to an infinite-temperature thermal environment through jump operators 
\begin{equation}
	\hat L_\pm=\sqrt{\gamma_{\rm th}}\,\hat\sigma_\pm,
\end{equation}
corresponding respectively to excitation and relaxation processes with transition rate $\gamma_{\rm th}$. Here $\hat\sigma_\pm = (\hat \sigma_x\pm i\hat \sigma_y)/2$ are raising $(+)$ and lowering $(-)$ operators, and $\hat \sigma_x$ and $\hat \sigma_y$ are Pauli-$x$ and Pauli-$y$ operators. The total dissipative strength satisfies \erf{eqn:cond:strctured:L} with $\Omega = \gamma_{\rm th}$, implying the smooth contribution is always normal $\mathcal D_{\mathcal S}=0$. In the Pauli basis $\{\hat I,\hat\sigma_x,\hat\sigma_y,\hat\sigma_z\}$, the Liouvillian assumes the block-diagonal form
\begin{equation}
	\mathcal L=
	\begin{pmatrix}
		0&0&0&0\\
		0&-\gamma_{\rm th}&-\nu&0\\
		0&\nu&-\gamma_{\rm th}&0\\
		0&0&0&-2\gamma_{\rm th}
	\end{pmatrix},
\end{equation}
which is readily verified to satisfy $[\mathcal L,\mathcal L^\dagger]=0$. The coherent dynamics acts only within the $(\hat\sigma_x,\hat\sigma_y)$ subspace through an orthogonal rotation, while each invariant subspace decays independently.
For this model the jump superoperator is also Hermitian, so that $[\mathcal J,\mathcal J^\dagger]=0$. The example therefore represents a particularly simple member of the structured-dissipator class in which both the smooth and jump generators are individually normal. Accordingly, the mixed smooth--jump contribution also vanishes, and the Liouvillian normality condition is satisfied term by term rather than through the cancellation mechanism discussed in the general theory.

At the trajectory level, the effective Hamiltonian generates the propagator in \erf{eqn:smooth:propag:struct:L}, whose norm is exactly $e^{-\gamma_{\rm th} t/2}$. Since $\hat H_{\rm eff}$ is normal, the eigenvectors remain orthogonal and the condition number appearing in Lemma~\ref{lemma:smooth:propagator} is unity. After normalization, however, the common exponential attenuation cancels identically from the stochastic master equation, leaving only the unitary evolution generated by $\hat H_0$. Individual trajectories therefore consist of coherent precession about the $z$ axis interrupted by stochastic excitation and relaxation events occurring with the constant Poisson rate $\gamma_{\rm th}$. The stochastic jumps redistribute population between the energy eigenstates but do not generate the trajectory-level non-normal alignment discussed in Sec.~\ref{subsec:wavefunction_geometry}. This model therefore illustrates the structured-dissipation limit in its simplest form, where the absence of non-orthogonal mode mixing induced by stochastic map reflects the fact that each dynamical phase of the unraveling is already normal individually.

\section{Conclusion} \label{sec:conclusion}

In this work we investigated how Liouvillian normality manifests itself at the
level of quantum trajectories. While previous studies classified Markovian open
quantum systems according to the non-normality of the full Lindblad generator,
we showed that this global property admits a natural trajectory-level
interpretation through the interplay between deterministic smooth evolution and
stochastic quantum jumps. By decomposing the Lindbladian into smooth and jump
generators and analyzing the associated commutator structure, we found that
Liouvillian normality is generally not inherited by the individual trajectory
generators. Instead, it is maintained through an exact algebraic balance between
the smooth, jump, and mixed smooth--jump contributions. In the steady-state
subspace this balance reduces to an exact equality between the expectation
values of the smooth and jump commutators, identifying the mixed
smooth--jump term as the mechanism through which global Liouvillian normality is
enforced. At the spectral level, global normality also excludes defective
Liouvillian spectra, ruling out exceptional points and guaranteeing an
orthogonal decomposition of the relaxation dynamics.

Expanding both the conditional and unconditional dynamics in the Liouvillian
eigenoperator basis further clarified the distinction between individual
trajectories and ensemble evolution. For normal Lindbladians, the unconditional
state evolves through independent orthogonal relaxation modes, whereas
individual stochastic trajectories generally exhibit mode mixing induced by the
jump process. These trajectory-level couplings disappear exactly after ensemble
averaging, recovering the independent modal evolution of the unconditional
density operator.

The second-moment analysis further showed that, for normal Lindbladians, transient amplification associated with the uncoupled doubled dynamics is absent. Within this doubled-space description, any growth of trajectory fluctuations originates from the stochastic jump coupling rather than from non-orthogonal Liouvillian eigenmodes. Consequently, the long-time dynamics is governed entirely by the orthogonal relaxation modes of the normal Liouvillian.

Finally, we examined two analytically tractable subclasses of normal
Lindbladians. Hermitian Lindbladians represent the most restrictive case,
supporting purely relaxational dynamics with real spectra, while structured
dissipative processes reduce the deterministic evolution to unitary transport
between stochastic jumps occurring at a constant Poisson rate. These examples
illustrate how the general framework simplifies in physically relevant limits
while preserving the underlying trajectory-level interpretation.

Several directions remain open. Since the decomposition into smooth and jump
contributions depends on the chosen unraveling, it would be interesting to
determine whether unravelings can be optimized to reduce trajectory-level mode
mixing or improve simulation efficiency without altering the unconditional
dynamics. A second challenge is the construction of genuinely generic normal
Lindbladians for which the smooth, jump, and mixed commutators are all
nonvanishing while satisfying the normality condition. Such models
would provide valuable testbeds for the mechanisms identified here. Finally, it
would be interesting to explore whether trajectory-level non-normality can be
exploited as a resource for quantum simulation, feedback, or error-mitigation
protocols, where the complexity of individual trajectories, rather than the
unconditional dynamics, is often the primary computational bottleneck.

	\section*{Acknowledgment}
This work was partly supported by the Australian Government through the Trailblazer Program.

	\appendix
	\numberwithin{equation}{section}

\section{Alternative proof for Theorem~\ref{theorem:noEP}} \label{appn:proof:noEP}

	Assume for contradiction that $\mathcal{L}$ possesses an exceptional point at a specific choice of system parameters. Under this assumption, $\mathcal{L}$ becomes defective, meaning there exists a degenerate eigenvalue $\Lambda_0$ associated with a nontrivial Jordan block of at least order 2. Let ${\rho}_0$ be the true, nonzero eigenoperator at the base of the Jordan chain, and let ${\rho}_1$ be its generalized eigenoperator. By definition of a Jordan block, these operators satisfy the coupled relations:
	\begin{align}
		\label{eq:jordan_base}
		\mathcal{L}(\rho_0) &= \Lambda_0 \rho_0, \\
		\label{eq:jordan_gen}
		\mathcal{L}(\rho_1) &= \Lambda_0 \rho_1 + \rho_0.
	\end{align}
	We evaluate the structural constraint of Eq.~\eqref{eq:jordan_gen} by taking its Hilbert-Schmidt inner product with the base eigenoperator $\rho_0$:
	\begin{equation} \label{eq:ep_inner_prod}
		\langle \rho_0, \mathcal{L}(\rho_1) \rangle = \Lambda_0 \langle \rho_0 \vert \rho_1 \rangle + \langle \rho_0 \vert \rho_0 \rangle.
	\end{equation}
	Utilizing the definition of the superoperator adjoint, the left-hand side of Eq.~\eqref{eq:ep_inner_prod} can be rewritten by shifting the action of $\mathcal{L}$ onto the left state element:
	\begin{equation} \label{eq:ep_adjoint_shift}
		\langle \rho_0, \mathcal{L}(\rho_1) \rangle = \langle \mathcal{L}^\dagger(\rho_0), \rho_1 \rangle.
	\end{equation}
	Crucially, because $\mathcal{L}$ is a normal superoperator ($[\mathcal{L}, \mathcal{L}^\dagger] = 0$), the spectral theorem for normal operators dictates that $\mathcal{L}$ and $\mathcal{L}^\dagger$ share the exact same eigenoperators with complex conjugate eigenvalues. Substituting $\mathcal{L}^\dagger(\rho_0) = \Lambda_0^* \rho_0$ into Eq.~\eqref{eq:ep_adjoint_shift} yields:
	\begin{equation}
		\langle \mathcal{L}^\dagger(\rho_0), \rho_1 \rangle = \langle \Lambda_0^* \rho_0 , \rho_1 \rangle = \Lambda_0 \langle \rho_0 \vert \rho_1 \rangle,
	\end{equation}
	where the complex conjugate scalar $\Lambda_0^*$ pulls out of the left-hand side of the inner product as a standard linear multiplier $\Lambda_0$.
	Equating this result back into the right-hand side expansion of Eq.~\eqref{eq:ep_inner_prod} results in:
	\begin{equation}
		\Lambda_0 \langle \rho_0 \vert \rho_1 \rangle = \Lambda_0 \langle \rho_0 \vert \rho_1 \rangle + \parallel \rho_0 \parallel^2.
	\end{equation}
	Subtracting the scalar term $\Lambda_0 \langle \rho_0 \vert \rho_1 \rangle$ from both sides produces a direct mathematical contradiction:
	\begin{equation}
		\parallel \rho_0 \parallel^2 = 0.
	\end{equation}
	Because $\rho_0$ is a nonzero eigenoperator of the system, its Hilbert-Schmidt norm must be strictly positive ($\parallel \rho_0 \parallel^2 > 0$). The assumption that a Jordan block can exist under global normality yields $0 > 0$, which is impossible. Consequently, $\mathcal{L}$ is fundamentally forbidden from possessing a defective Jordan structure, proving that normal Lindbladians are strictly immune to exceptional points.

\section{Derivation of the Doubled Liouvillian for Unnormalized Jump Trajectories}
\label{app:doubled_liouvillian}

In order to analyze the stability of trajectory fluctuations, we introduce an
unnormalized jump unraveling for which the stochastic evolution remains linear
in the trajectory density operator. The unnormalized state
$\tilde{\rho}_c$ obeys the stochastic master equation
\begin{equation} 	\label{eq:unnormalized_jump_SME}
	d\tilde{\rho}_c = \left( \mathcal{S}(\tilde{\rho}_c) + K\tilde{\rho}_c \right)dt + \sum_{k=1}^{K} \left(\mathcal{J}_k(\tilde{\rho}_c) - \tilde{\rho}_c \right)dN_k,
\end{equation}
where $\mathcal{J}_k$ denotes the jump superoperator associated with channel
$k$, and $\mathcal{S}$ contains the deterministic no-jump contribution. The
Poisson increments satisfy the It\^o rules
\begin{equation} \label{eqn:ito:rule}
	\mathbb{E}[dN_k]=dt ,
\end{equation}
and
\begin{equation} \label{eqn:jump:orthogonal}
	dN_k dN_m
	=
	\delta_{km}dN_k .
\end{equation}
Terms of order $dt^2$, $dt\,dN_k$, and higher orders are neglected in the
continuous-time limit. Identifying the physical Lindblad generator as

\begin{equation} 	\label{eq:Lindblad_decomposition}
	\mathcal{L}
	=\mathcal{S}+\sum_k\mathcal{J}_k \equiv \mathcal{S}+\mathcal{J},
\end{equation}
the first moment reproduces the standard master equation.

To study the second moment, we introduce the doubled trajectory operator

\begin{equation} 	\label{eq:unnormalized_doubled_state}
	\widetilde{\bm{\varrho}}(t) = \tilde{\mathbb{E}} \left[ \tilde{\rho}_c(t)\otimes\tilde{\rho}_c(t) \right].
\end{equation}
Using the It\^o product rule, one obtains: 
\begin{align}
	d(\tilde{\rho}_c\otimes\tilde{\rho}_c) &= \left[(\mathcal{S}+K\mathcal{I})\otimes\mathcal{I} + \mathcal{I}\otimes(\mathcal{S}+K\mathcal{I})\right] dt + \nn \\
	&\sum_k \left[\mathcal{J}_k - \mathcal{I}) \otimes \mathcal{I} + \mathcal{I}\otimes(\mathcal{J}_k -\mathcal{I})\right] dN_k + \nn \\
	& \sum_k
	(\mathcal{J}_k-\mathcal{I})
	\otimes
	(\mathcal{J}_k-\mathcal{I})
	(\tilde{\rho}_c\otimes\tilde{\rho}_c) dN_k.
\end{align}
The last term originates from the quadratic variation of the Poisson process, where using \erf{eqn:jump:orthogonal} only same jump channel products survive. 
After averaging over the Poisson process, the doubled state therefore evolves
according to a closed linear equation,
\begin{equation} 	\label{eq:doubled_linear_evolution}
	\frac{d}{dt}\widetilde{\bm{\varrho}}(t) = \widetilde{\mathcal L}^{(2)} \widetilde{\bm{\varrho}}(t),
\end{equation}
with 
\begin{equation}
	\widetilde{\mathcal L}^{(2)} =\mathcal{S}\otimes\mathcal{I}+\mathcal{I}\otimes\mathcal{S}+\sum_{k=1}^K\mathcal{J}_k\otimes\mathcal{J}_k + K\,\mathcal{I}\otimes\mathcal{I}
\end{equation}
Using the Lindblad decomposition, \erf{eq:Lindblad_decomposition}, the doubled generator may be written
in the form

\begin{equation}
	\widetilde{\mathcal L}^{(2)} = \mathcal L\otimes\mathcal I + \mathcal I\otimes\mathcal L + \widetilde{\mathcal W},
\end{equation}
where 
\begin{equation}
	\widetilde{\mathcal W} = \sum_{k=1}^K (\mathcal J_k - \mathcal{I)} \otimes (\mathcal J_k - \mathcal{I)},
\end{equation}
 contains all stochastic correlation terms generated by the jump process.

This doubled representation is the starting point for analyzing whether
second-moment amplification and therefore trajectory variance growth can occur
under global Liouvillian normality.

\section{Trajectory-weight fluctuations} \label{appn:wieght:fluc}

The linear doubled-space formulation admits a direct interpretation in terms of stochastic trajectory weights. Defining the unnormalized trajectory weight
\begin{equation}
	w_c(t) = \mathrm{Tr} \left[ \tilde{\rho}_c(t) \right],
\end{equation}
we first derive its stochastic evolution directly from the unnormalized stochastic master equation,
\begin{equation}
	d\tilde{\rho}_c = \left( \mathcal S(\tilde{\rho}_c) + K\tilde{\rho}_c \right)dt + \sum_{k=1}^{K} \left( \mathcal J_k(\tilde{\rho}_c) - \tilde{\rho}_c \right)dN_k.
\end{equation}
Taking the trace of both sides yields
\begin{equation}
	dw_c = \left( \mathrm{Tr} [\mathcal S(\tilde{\rho}_c)] + Kw_c \right)dt + \sum_{k=1}^{K} \left( \mathrm{Tr} [\mathcal J_k(\tilde{\rho}_c)] - w_c \right)dN_k .
\end{equation}
Using the trace-preserving property of the Lindblad generator,
\begin{equation}
	\mathrm{Tr} [\mathcal S(\tilde{\rho}_c)] = - \mathrm{Tr} [\mathcal J(\tilde{\rho}_c) ], \qquad \mathcal J = \sum_k\mathcal J_k,
\end{equation}
together with the definition
\begin{equation}
	r_k={\mathrm{Tr}[\mathcal J_k(\tilde{\rho}_c)]}/{w_c},
\end{equation}
the weight evolution becomes
\begin{equation} \label{eq:weight_SDE}
	dw_c = w_c(K-\sum_k r_k)\,dt + w_c \sum_k (r_k-1)\,dN_k 
\end{equation}

To characterize the growth of weight fluctuations, we apply the It\^{o} product rule,
\begin{equation}
	d(w_c^2)=2w_c\,dw_c+(dw_c)^2 .
\end{equation}
Using Eq.~\eqref{eq:weight_SDE} together with \erfa{eqn:ito:rule}{eqn:jump:orthogonal}
we obtain
\begin{align}
	d(w_c^2) & = 2w_c^2 (K-\sum_k r_k )\,dt + 2w_c^2 \sum_k (r_k-1) \,dN_k \nonumber\\
	&\quad + w_c^2 \sum_k (r_k-1)^2 \,dN_k .
\end{align}
Taking the ensemble average and using $\mathbb E[dN_k]=dt$, one obtains
\begin{equation} 	\label{eq:weight_second_moment}
	\frac{d}{dt}\tilde{\mathbb E}[w_c^2]=\tilde{\mathbb E}\left[w_c^2 \sum_k (r_k-1)^2\right].
\end{equation}
Equation~\eqref{eq:weight_second_moment} governs the growth of the second moment of the stochastic trajectory weights.
For finite-dimensional systems, the instantaneous jump rates are bounded because the normalized trajectory state satisfies $\rho_c\ge0$ and $\mathrm{Tr}(\rho_c)=1$. Consequently,
\begin{equation}
	r_k = \mathrm{Tr} \left( L_k^\dagger L_k \rho_c \right) \le \left\|L_k^\dagger L_k\right\|,
\end{equation}
 It follows that
\begin{equation}
	(r_k-1)^2
	\le
	\max
	\left\{
		1,
		\left(\left\|L_k^\dagger L_k\right\|-1\right)^2
		\right\},
\end{equation}
and therefore
\begin{equation}
	\sum_{k=1}^{K}(r_k-1)^2 \le C_L,
\end{equation}
with
\begin{equation}
	C_L = \sum_{k=1}^{K} \max \left\{ 1, \left(\left\|L_k^\dagger L_k\right\|-1\right)^2 \right\}.
\end{equation}
Using Eq.~(\ref{eq:weight_second_moment}), the second moment of the trajectory weights satisfies
\begin{equation}
	\frac{d}{dt}\mathbb{E}[w_c^2] = \mathbb{E} \left[w_c^2 \sum_k (r_k-1)^2 \right] \le C_L\,\mathbb{E}[w_c^2].
\end{equation}
An application of Gr\"{o}nwall's inequality yields
\begin{equation}
	\mathbb{E}[w_c^2(t)] \le	e^{C_L t}\, \mathbb{E}[w_c^2(0)].
\end{equation}
Thus, although the correlation superoperator $\widetilde{\mathcal W}$ can generate growth of trajectory-weight fluctuations, its contribution is bounded by at most an exponential envelope determined solely by the jump operators. In particular, $\widetilde{\mathcal W}$ cannot produce super-exponential growth of the trajectory weights, excluding an uncontrolled amplification mechanism within the doubled-space dynamics.

The same result follows directly from the doubled-space evolution equation. Using
\begin{equation}
	\tilde{\bm{\varrho}}(t) = \mathbb E \left[\tilde{\rho}_c(t) \otimes \tilde{\rho}_c(t) \right],
\end{equation}
together with the identity
\begin{equation}
	\mathrm{Tr}(X\otimes Y) = \mathrm{Tr}(X)\,\mathrm{Tr}(Y),
\end{equation}
one finds
\begin{equation} 	\label{eq:R_trace_weight}
	\mathrm{Tr} \left[\tilde{\bm{\varrho}}(t) \right] = \tilde{\mathbb E} [w_c(t)^2].
\end{equation}
Taking the trace of the doubled evolution equation,
\begin{equation}
	\dot{\tilde{\bm{\varrho}}} = \widetilde{\mathcal L}^{(2)} \tilde{\bm{\varrho}},
\end{equation}
gives
\begin{equation}
	\frac{d}{dt} \tilde{\mathbb E}[w_c^2] = \mathrm{Tr} \left[\widetilde{\mathcal L}^{(2)} \tilde{\bm{\varrho}} \right].
\end{equation}
Substituting
\begin{equation}
	\widetilde{\mathcal L}^{(2)} = \widetilde{\mathcal L}^{(2)}_0 + \widetilde{\mathcal W},
\end{equation}
reveals that the uncoupled contribution does not affect weight growth. Indeed,
\begin{equation}
	\mathrm{Tr} \left[(\mathcal L\otimes\mathcal I) \tilde{\bm{\varrho}} \right]=0, \qquad \mathrm{Tr} \left[(\mathcal I\otimes\mathcal L) \tilde{\bm{\varrho}} \right]=0,
\end{equation}
because the Lindblad generator is trace preserving,
\begin{equation}
	\mathrm{Tr} [\mathcal L(X)] = 0 \qquad \forall \, X.
\end{equation}
Consequently,
\begin{equation}
	\mathrm{Tr} \left[\widetilde{\mathcal L}^{(2)}_0 \tilde{\bm{\varrho}} \right]=0,
\end{equation}
and therefore
\begin{equation} 	\label{eq:weight_growth_from_W}
	\frac{d}{dt} \tilde{\mathbb E}[w_c^2] = \mathrm{Tr} \left[\widetilde{\mathcal W} \tilde{\bm{\varrho}} \right].
\end{equation}
Equation~\eqref{eq:weight_growth_from_W} shows that all growth of the trajectory-weight fluctuations originates from the correlation superoperator $\widetilde{\mathcal W}$. In contrast, the uncoupled doubled Liouvillian $\widetilde{\mathcal L}^{(2)}_0$ contributes only to the geometric evolution of the doubled state and cannot generate weight amplification. This separation forms the basis of the stability analysis presented in the main text.

\section{Proof of Lemma~\ref{lemma:smooth:propagator}} \label{appn:proof:smooth:lemma}

	By applying a similarity transformation, the effective Hamiltonian is expressed in its eigenbasis as $\hat{H}_{\text{eff}} = V D V^{-1}$ where $D = \text{diag}(\lambda_1, \dots, \lambda_D)$, and $\lambda_m = \varepsilon_m - \frac{i}{2}\gamma_m$. Taking the matrix exponential of both sides yields:
	\begin{equation}
		e^{-i \hat{H}_{\text{eff}} \,\Delta t} = e^{-i (V D V^{-1})\, \Delta t} = V e^{-i \Delta t D}\, V^{-1}.
	\end{equation}
	Taking the induced matrix norm of both sides and invoking the sub-multiplicative property of matrix norms results in the inequality:
	\begin{equation}
			\left\| e^{-i \hat{H}_{\text{eff}} \Delta t} \right\| = \left\| V e^{-i D \Delta t} V^{-1} \right\| \le \left\| V \right\| \cdot \left\| e^{-i \Delta t D} \right\| \cdot \left\| V^{-1} \right\|.
	\end{equation}

	Regrouping the boundary terms isolates the eigenvector condition number $\Bbbk_V$:
	\begin{equation}
		\left\| e^{-i \hat{H}_{\text{eff}} \Delta t} \right\| \le \Bbbk_V \left\| e^{-i \Delta t D} \right\|.
	\end{equation}
	Because $D$ is strictly diagonal, the matrix norm of its exponential is exactly equal to the maximum absolute value of its diagonal entries: 
	\begin{equation}
		\left\| e^{-i D \Delta t} \right\| = \max_m \big| e^{-i(\varepsilon_m - \frac{i}{2}\gamma_m)\Delta t} \big|.
	\end{equation}
	Since $|e^{-i \varepsilon_m \Delta t}| = 1$ for all real energy components, the magnitude is governed purely by the real decay exponent, which is maximized by selecting the minimum decay rate in the spectrum:
	\begin{equation}
		\left\| e^{-i D \Delta t} \right\| = \max_m \left( e^{-\gamma_m \Delta t / 2} \right) = e^{-\gamma_{\min} \Delta t / 2}.
	\end{equation}
	Substituting this back into the norm inequality yields the finished bound:
	\begin{equation}
		\left\| e^{-i \hat{H}_{\text{eff}} \Delta t} \right\| \le \Bbbk_V\, e^{-\gamma_{\min} \Delta t / 2}.
	\end{equation}

	\bibliography{references}
	
\end{document}